\begin{document}

\def\Blue{\color{blue}}
\def\Purple{\color{purple}}

\def\A{{\bf A}}
\def\a{{\bf a}}
\def\B{{\bf B}}
\def\C{{\bf C}}
\def\c{{\bf c}}
\def\D{{\bf D}}
\def\d{{\bf d}}
\def\F{{\bf F}}
\def\e{{\bf e}}
\def\f{{\bf f}}
\def\G{{\bf G}}
\def\H{{\bf H}}
\def\I{{\bf I}}
\def\K{{\bf K}}
\def\L{{\bf L}}
\def\M{{\bf M}}
\def\m{{\bf m}}
\def\N{{\bf N}}
\def\n{{\bf n}}
\def\Q{{\bf Q}}
\def\q{{\bf q}}
\def\S{{\bf S}}
\def\s{{\bf s}}
\def\T{{\bf T}}
\def\U{{\bf U}}
\def\u{{\bf u}}
\def\V{{\bf V}}
\def\v{{\bf v}}
\def\W{{\bf W}}
\def\w{{\bf w}}
\def\X{{\bf X}}
\def\x{{\bf x}}
\def\Y{{\bf Y}}
\def\y{{\bf y}}
\def\Z{{\bf Z}}
\def\z{{\bf z}}
\def\0{{\bf 0}}
\def\1{{\bf 1}}

\def\AM{{\mathcal A}}
\def\CM{{\mathcal C}}
\def\DM{{\mathcal D}}
\def\GM{{\mathcal G}}
\def\FM{{\mathcal F}}
\def\IM{{\mathcal I}}
\def\NM{{\mathcal N}}
\def\OM{{\mathcal O}}
\def\SM{{\mathcal S}}
\def\TM{{\mathcal T}}
\def\UM{{\mathcal U}}
\def\XM{{\mathcal X}}
\def\YM{{\mathcal Y}}
\def\RB{{\mathbb R}}

\def\TX{\tilde{\bf X}}
\def\tx{\tilde{\bf x}}
\def\ty{\tilde{\bf y}}
\def\TZ{\tilde{\bf Z}}
\def\tz{\tilde{\bf z}}
\def\hd{\hat{d}}
\def\HD{\hat{\bf D}}
\def\hx{\hat{\bf x}}
\def\TD{\tilde{\Delta}}

\def\alp{\mbox{\boldmath$\alpha$\unboldmath}}
\def\bet{\mbox{\boldmath$\beta$\unboldmath}}
\def\epsi{\mbox{\boldmath$\epsilon$\unboldmath}}
\def\etab{\mbox{\boldmath$\eta$\unboldmath}}
\def\ph{\mbox{\boldmath$\phi$\unboldmath}}
\def\pii{\mbox{\boldmath$\pi$\unboldmath}}
\def\Ph{\mbox{\boldmath$\Phi$\unboldmath}}
\def\Ps{\mbox{\boldmath$\Psi$\unboldmath}}
\def\tha{\mbox{\boldmath$\theta$\unboldmath}}
\def\Tha{\mbox{\boldmath$\Theta$\unboldmath}}
\def\muu{\mbox{\boldmath$\mu$\unboldmath}}
\def\Si{\mbox{\boldmath$\Sigma$\unboldmath}}
\def\si{\mbox{\boldmath$\sigma$\unboldmath}}
\def\Gam{\mbox{\boldmath$\Gamma$\unboldmath}}
\def\Lam{\mbox{\boldmath$\Lambda$\unboldmath}}
\def\De{\mbox{\boldmath$\Delta$\unboldmath}}
\def\Ome{\mbox{\boldmath$\Omega$\unboldmath}}
\def\TOme{\mbox{\boldmath$\hat{\Omega}$\unboldmath}}
\def\vps{\mbox{\boldmath$\varepsilon$\unboldmath}}
\newcommand{\ti}[1]{\tilde{#1}}
\def\Ncal{\mathcal{N}}
\def\argmax{\mathop{\rm argmax}}
\def\argmin{\mathop{\rm argmin}}
\providecommand{\abs}[1]{\lvert#1\rvert}
\providecommand{\norm}[2]{\lVert#1\rVert_{#2}}

\def\Zs{{\Z_{\mathrm{S}}}}
\def\Zl{{\Z_{\mathrm{L}}}}
\def\Yr{{\Y_{\mathrm{R}}}}
\def\Yg{{\Y_{\mathrm{G}}}}
\def\Yb{{\Y_{\mathrm{B}}}}
\def\Ar{{\A_{\mathrm{R}}}}
\def\Ag{{\A_{\mathrm{G}}}}
\def\Ab{{\A_{\mathrm{B}}}}
\def\As{{\A_{\mathrm{S}}}}
\def\Asr{{\A_{\mathrm{S}_{\mathrm{R}}}}}
\def\Asg{{\A_{\mathrm{S}_{\mathrm{G}}}}}
\def\Asb{{\A_{\mathrm{S}_{\mathrm{B}}}}}
\def\Or{{\Ome_{\mathrm{R}}}}
\def\Og{{\Ome_{\mathrm{G}}}}
\def\Ob{{\Ome_{\mathrm{B}}}}

\def\vec{\mathrm{vec}}
\def\fold{\mathrm{fold}}
\def\index{\mathrm{index}}
\def\sgn{\mathrm{sgn}}
\def\tr{\mathrm{tr}}
\def\rk{\mathrm{rank}}
\def\diag{\mathsf{diag}}
\def\const{\mathrm{Const}}
\def\dg{\mathsf{dg}}
\def\st{\mathsf{s.t.}}
\def\vect{\mathsf{vec}}
\def\MCAR{\mathrm{MCAR}}
\def\MSAR{\mathrm{MSAR}}
\def\etal{{\em et al.\/}\,}
\newcommand{\indep}{{\;\bot\!\!\!\!\!\!\bot\;}}

\def\Lsize{\hbox{\space \raise-2mm\hbox{$\textstyle \L \atop \scriptstyle {m\times 3n}$} \space}}
\def\Ssize{\hbox{\space \raise-2mm\hbox{$\textstyle \S \atop \scriptstyle {m\times 3n}$} \space}}
\def\Osize{\hbox{\space \raise-2mm\hbox{$\textstyle \Ome \atop \scriptstyle {m\times 3n}$} \space}}
\def\Tsize{\hbox{\space \raise-2mm\hbox{$\textstyle \T \atop \scriptstyle {3n\times n}$} \space}}
\def\Bsize{\hbox{\space \raise-2mm\hbox{$\textstyle \B \atop \scriptstyle {m\times n}$} \space}}

\newcommand{\twopartdef}[4]
{
	\left\{
		\begin{array}{ll}
			#1 & \mbox{if } #2 \\
			#3 & \mbox{if } #4
		\end{array}
	\right.
}

\newcommand{\tabincell}[2]{\begin{tabular}{@{}#1@{}}#2\end{tabular}}

\newtheorem{theorem}{Theorem}
\newtheorem{lemma}{Lemma}
\newtheorem{proposition}{Proposition}
\newtheorem{corollary}{Corollary}
\newtheorem{definition}{Definition}
\newtheorem{remark}{Remark}

\def\E{{\mathbb E}}
\def\R{{\mathbb R}}

\DeclarePairedDelimiter\ceil{\lceil}{\rceil}
\DeclarePairedDelimiter\floor{\lfloor}{\rfloor}

\newcommand{\ip}[2]{\left\langle #1, #2 \right \rangle}

\twocolumn[
\icmltitle{Generalized Byzantine-tolerant SGD}



\icmlsetsymbol{equal}{*}

\begin{icmlauthorlist}
\icmlauthor{Cong Xie}{UIUC}
\icmlauthor{Oluwasanmi Koyejo}{UIUC}
\icmlauthor{Indranil Gupta}{UIUC}
\end{icmlauthorlist}

\icmlaffiliation{UIUC}{Department of Computer Science, University of Illinois at Urbana-Champaign, Illinois, US}
%
\icmlcorrespondingauthor{Cong Xie}{cx2@illinois.edu}

\icmlkeywords{Byzantine failure, marginal median}

\vskip 0.3in
]




\begin{abstract}

We propose three new robust aggregation rules for distributed synchronous Stochastic Gradient Descent~(SGD) under a general Byzantine failure model. The attackers can arbitrarily manipulate the data transferred between the servers and the workers in the parameter server~(PS) architecture. We prove the Byzantine resilience properties of these aggregation rules. Empirical analysis shows that the proposed techniques outperform current approaches for realistic use cases and Byzantine attack scenarios.
\end{abstract}

\section{Introduction}

The failure resilience of distributed machine-learning systems has attracted increasing attention~\citep{blanchard2017machine,chen2017distributed} in the community. 
Larger clusters can accelerate training. However, this makes the distributed system more vulnerable to different kinds of failures or even attacks, including crashes and computation errors, stalled processes, or compromised sub-systems~\citep{harinath2017review}. 
Thus, failure/attack resilience is becoming more and more important for distributed machine-learning systems, especially for large-scale deep learning~\citep{Dean2012LargeSD,McMahan2017CommunicationEfficientLO}.  

In this paper, we consider the most general failure model, Byzantine failures~\citep{Lamport1982TheBG}, where the attackers can know any information of the other processes, and attack any value in transmission. To be more specific,  the data transmission between the machines can be replaced by arbitrary values. Under such model, there are no constraints on the failures or attackers. 


The distributed training framework studied in this paper is the \underline{P}arameter \underline{S}erver~(PS). The PS architecture is composed of the server nodes and the worker nodes. The server nodes maintain a global copy of the model, aggregate the gradients from the workers, apply the gradients to the model, and broadcast the latest model to the workers. The worker nodes pull the latest model from the server nodes, compute the gradients according to the local portion of the training data, and send the gradients to the server nodes. The entire dataset and the corresponding workload is distributed to multiple worker nodes, thus parallelizing the computation via partitioning the dataset. 
There exist several distributed machine learning systems using the PS architecture. For instance, Tensorflow~\cite{Abadi2016TensorFlowAS}, CNTK~\cite{Seide2016CNTKMO}, and MXNet~\cite{Chen2015MXNetAF} implement internal PS's.

In this paper, we study the Byzantine resilience of synchronous Stochastic Gradient Descent~(SGD), which is a popular class of learning algorithms using PS architecture. Its variants are widely used in training deep neural networks~\cite{Kingma2014AdamAM,Mukkamala2017VariantsOR}. Such algorithms always wait to collect gradients from all the worker nodes before moving on to the next iteration. 

\begin{figure}[htb!]
\centering
\subfigure[Classic Byzantine]{\includegraphics[width=0.238\textwidth]{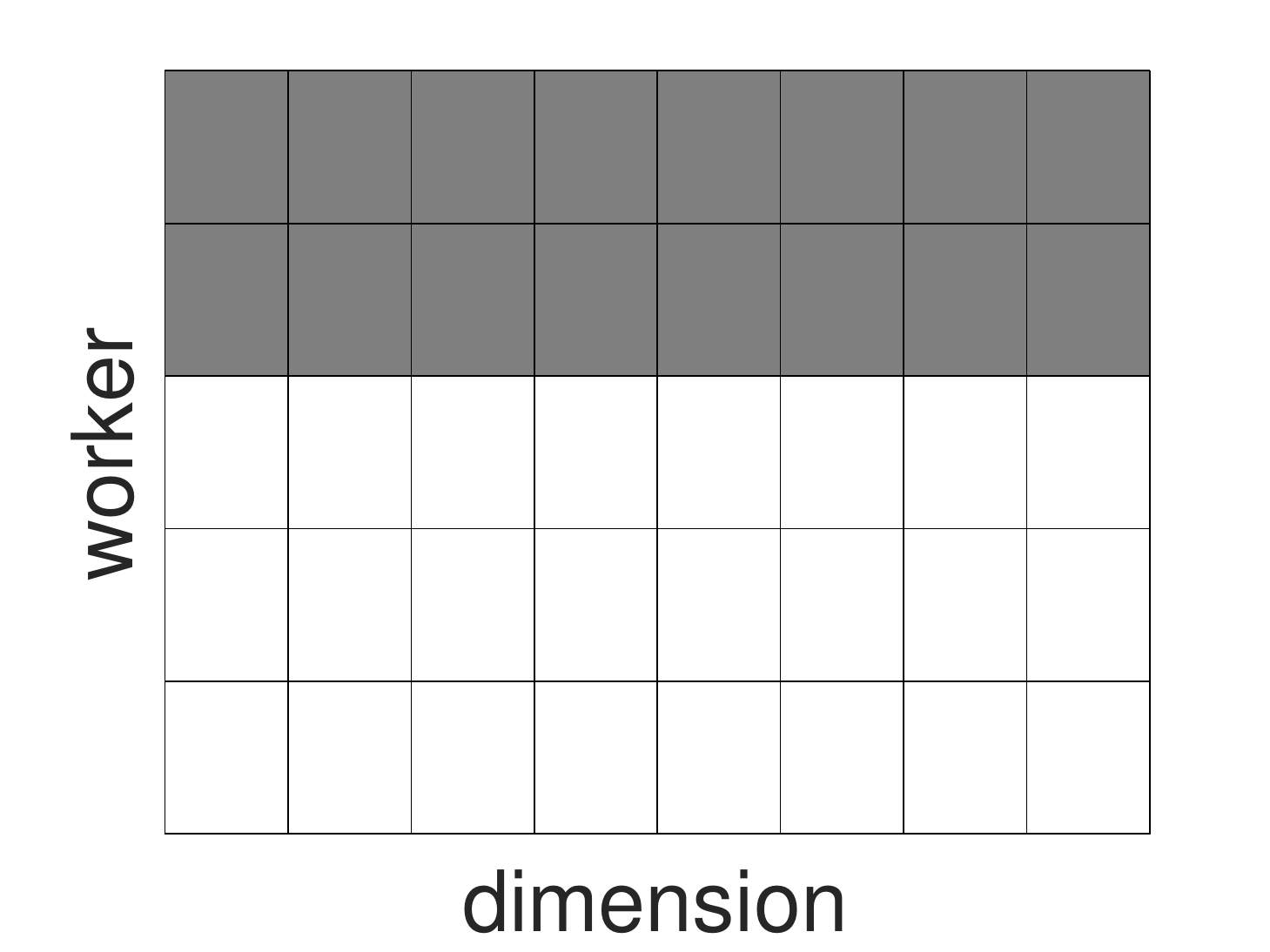}}
\subfigure[Generalized Byzantine]{\includegraphics[width=0.238\textwidth]{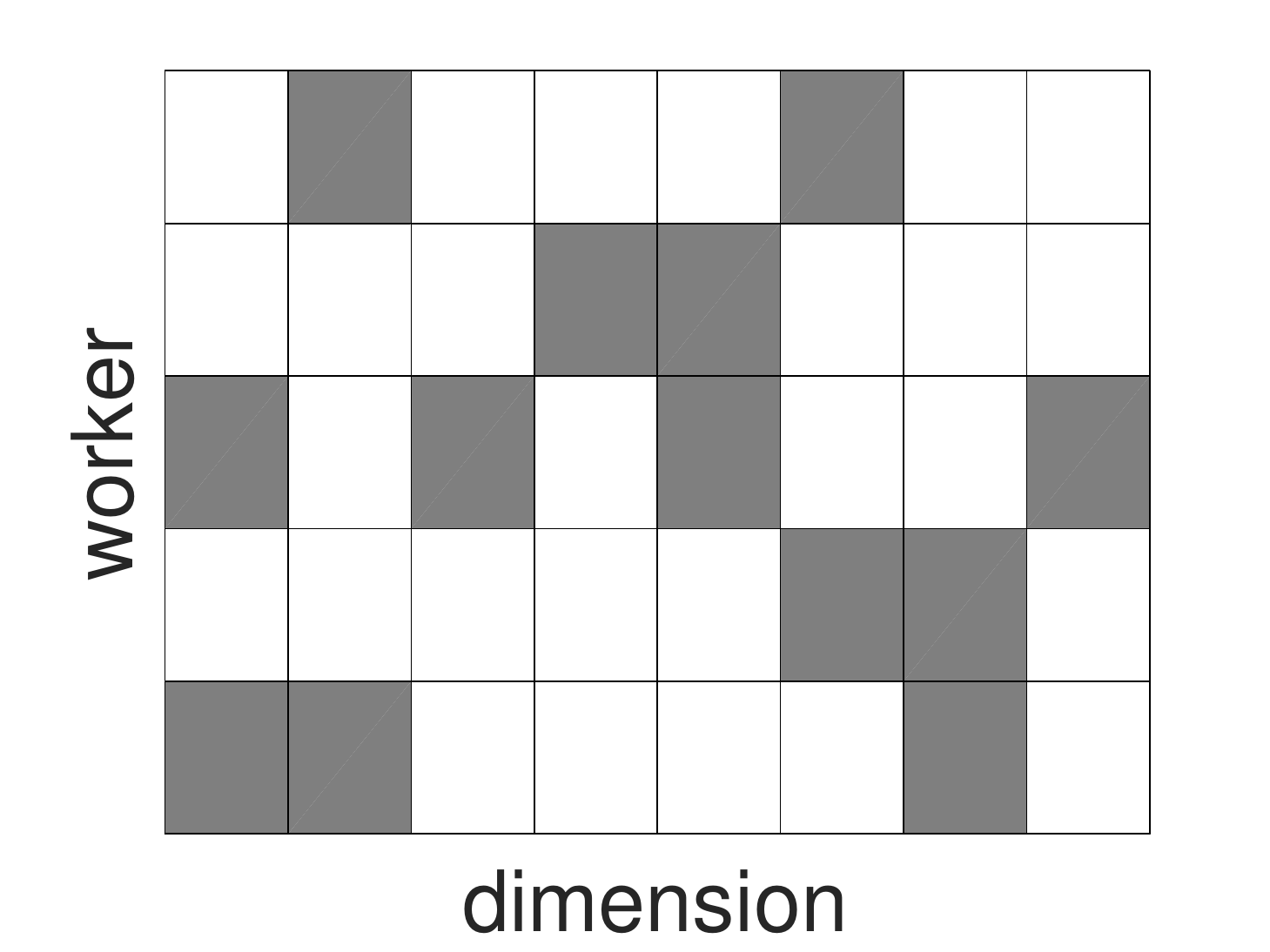}}
\caption{The 2 figures visualize $5$ workers with $8$-dimensional gradients. The $i$th row represents the gradient vector produced by the $i$th worker. The $j$th column represents the $j$th dimension of the gradients. A shadow block represents that the corresponding value is replaced by a Byzantine value. In the two examples, the maximal number of Byzantine values for each dimension is $2$. For the classic Byzantine model, all the Byzantine values must lie in the same workers~(rows), while for the generalized Byzantine model there is no such constraint. Thus, (a) is a special case of (b).}
\label{fig:viz_byz}
\end{figure}

The failure model can be described by using an $n \times d$ matrix consisting of the $d$-dimensional gradients produced by $n$ workers, as visualized in Figure~\ref{fig:viz_byz}. A previous work~\cite{blanchard2017machine} discusses a special case of our failure model, where the Byzantine values must lie in the same rows~(workers) as shown in Figure~\ref{fig:viz_byz}(a). Our failure model generalize the classic Byzantine failure model by placing the Byzantine values anywhere in the matrix without any constraint.



There are many possible types of attacks. In general, the attackers want to disturb the model training, i.e., make SGD converge slowly or converge to a bad solution. We list some of the possible attacks in the following three paragraphs.

We name the most general type of attacks as \textit{gamber}. The attackers can change a portion of data on the communication media such as the wires or the network interfaces. The attackers randomly pick the data and maliciously change them~(e.g., multiply them by a large negative value). As a result, on the server nodes, the collected gradients are partially replaced by arbitrary values. 

Another possible type of attack is called \textit{omniscient}. The attackers are supposed to know the gradients sent by all the workers, and use the sum of all the gradients, scaled by a large negative value, to replace some of the gradient vectors. The goal is to mislead SGD to go into an opposite direction with a large step size. 

There are also some weaker attacks, such as \textit{Gaussian attack}, where some of the gradient vectors are replaced by random vectors sampled from a Gaussian distribution with large variances. Such attackers do not require any information from the workers.

With the generalized Byzantine failure model, we ask that using what aggregation rules and on what conditions, the synchronous SGD can still converge to good solutions. 
We propose novel median-based aggregation rules, with which SGD is Byzantine resilient on a certain condition: for each dimension, in all the $n$ values provided by the $n$ workers, the number of Byzantine values must be less than half of $n$. Such Byzantine resilience property is called ``dimensional Byzantine resilience".
The main contributions of this paper are listed below: 
\setitemize[0]{leftmargin=*}
\begin{itemize} 
\item We propose three aggregation rules for synchronous SGD with provable convergence to critical points: geometric median~(Definition~\ref{def:geomed}), marginal median~(Definition~\ref{def:marmed}), and ``mean around median"~(Definition~\ref{def:meamed}). As far as we know, this paper is the first to theoretically and empirically study median-based aggregation rules under non-convex settings.
\item We show that the three proposed robust aggregation rules have low computation cost. The time complexities are nearly linear, which are in the same order of the default choice for non-Byzantine aggregation, i.e., averaging.
\item We formulate the dimensional Byzantine resilience property, and prove that marginal median and ``mean around median" are dimensional Byzantine-resilient~(Definition~\ref{def:dim_byz}). As far as we know, this paper is the first one to study generalized Byzantine failures and dimensional Byzantine resilience for synchronous SGD.
\end{itemize}

\section{Model}

We consider the Parameter Server architecture consisting of $n$ workers. The goal is to find the optimizer of the following problem:
\begin{align*}
\min_x \E \left[ f(x, \xi) \right],
\end{align*}
where the expectation is with respect to the random variable $\xi$.
The PS executes synchronous SGD for distributed training. In each round, the server nodes collect $n$ gradients from the workers.  
In the $t^{\text{th}}$ round, the server nodes aggregate the gradients $\{\tilde{v}_i^t: i \in [n]\}$ from the workers, and broadcast the updated parameters $x^{t+1}$ to the workers. $\tilde{v}_i^t$ is the vector sent by the $i$th worker in the $t$th round, potentially Byzantine.  Using aggregation rule $Aggr(\cdot)$, the server nodes update the parameters as follows:
\[
x^{t+1} \leftarrow x^t - \gamma^t Aggr(\{\tilde{v}_i^t: i \in [n]\}),
\]
where $\gamma^t$ is the learning rate. The worker nodes pull the latest parameters from the server nodes, compute the gradients according to the local portion of the training data, and send the gradients to the server nodes. Without the Byzantine failures, the $i$th worker will calculate $v_i^t \sim G^t$, where $G^t = \nabla f(x^t, \xi)$. With Byzantine failures, $v_i^t$ are partially replaced by any arbitrary values, which results $\tilde{v}_i^t$. 

Since the Byzantine failure assumes the worst cases, the attackers may have full knowledge of the entire system, including the gradients generated by all the workers, and the aggregation rule $Aggr(\cdot)$. The malicious processes can even collaborate with each other~\cite{lynch1996distributed}.

\section{Byzantine Resilience}

In this section, we formally define the classic Byzantine resilience property and its generalized version: dimensional Byzantine resilience.

Suppose that in a specific round, the correct vectors $\{v_i: i \in [n]\}$ are i.i.d samples drawn from the random variable $G = \nabla f(x, \xi)$, where $\E[G] = g$ is an unbiased estimator of the gradient. Thus, $\E[v_i] = \E[G] = g$, for any $i \in [n]$. We simplify the notations by ignoring the index of round $t$.

We first introduce the classic Byzantine model proposed by \citet{blanchard2017machine}.
With the Byzantine workers, the actual vectors $\{\tilde{v}_i: i \in [n]\}$ received by the server nodes are as follows: 
\begin{definition}[Classic Byzantine Model]
\begin{align}
\tilde{v}_i = 
\begin{cases}
v_i, \mbox{if the $i$th worker is correct,}\\
arbitrary, \mbox{if the $i$th worker is Byzantine}.
\end{cases}
\label{equ:byz_worker}
\end{align}
\end{definition}
Note that the indices of Byzantine workers can change throughout different rounds. Furthermore, the server nodes are not aware of which workers are Byzantine. The only information given is the number of Byzantine workers, if necessary.

We directly use the same definition of classic Byzantine resilience proposed in \cite{blanchard2017machine}.
\begin{definition} 
\label{def:byz}
(Classic $(\alpha,q)$-Byzantine Resilience). Let $0 \leq \alpha < \pi/2$ be any angular value, and any integer $0 \leq q \leq n$. Let $\{v_i: i \in [n]\}$ be any i.i.d. random vectors in $\R^d$, $v_i \sim G$, with $\E[G] = g$. Let $\{\tilde{v}_i: i \in [n]\}$ be the set of vectors, of which up to $q$ of them are replaced by arbitrary vectors in $\R^d$, while the others still equal to the corresponding $\{v_i\}$. Aggregation rule $Aggr(\cdot)$ is said to be classic $(\alpha, q)$-Byzantine resilient if 
$
Aggr(\{\tilde{v}_i: i \in [n]\})
$
satisfies (i) $\ip{\E[Aggr]}{g} \geq (1-\sin \alpha) \|g\|^2 > 0$ and (ii) for $r = 2,3,4$, $\E\|Aggr\|^r$ is bounded above by a linear combination of terms $\E\|G\|^{r_1}, \ldots, \E\|G\|^{r_{n-q}}$ with $r_1 + \ldots + r_{n-q} = r$.
\end{definition}

The baseline algorithm \textit{Krum}, denoted as $Krum(\{\tilde{v}_i: i \in [n]\})$~\cite{blanchard2017machine}, is defined as follows
\begin{definition} 
\label{def:krum}
\begin{align*}
&Krum(\{\tilde{v}_i: i \in [n]\}) = \tilde{v}_k, \\
&k = \argmin_{i \in [n]} \sum_{i \rightarrow j} \| \tilde{v}_i - \tilde{v}_j \|^2,
\end{align*}
where $i \rightarrow j$ is the indices of the $n-q-2$ nearest neighbours of $\tilde{v}_i$ in $\{\tilde{v}_i: i \in [n]\}$ measured by Euclidean distance.
\end{definition}
The \textit{Krum} aggregation is classic $(\alpha,q)$-Byzantine resilient under certain assumptions:
\begin{lemma}[\citet{blanchard2017machine}]
Let $v_1, \ldots, v_n$ be any i.i.d. random $d$-dimensional vectors s.t. $v_i \sim G$, with $\E[G] = g$ and $\E\|G-g\|^2 = d\sigma^2$. $q$ of $\{v_i: i \in [n]\}$ are replaced by arbitrary $d$-dimensional vectors $b_1, \ldots, b_q$. If $2q + 2 < n$ and $\eta_0(n, q)\sqrt{d}\sigma < \|g\|$, where 
\[
\eta_0^2(n, q) = 2\left(n-q + \frac{q(n-q-2) + q^2(n-q-1)}{n-2q-2} \right),
\]
then the $Krum$ function is classic $(\alpha_0, q)$-Byzantine resilient where $0 \leq \alpha_0 < \pi/2$ is defined by 
$\sin \alpha_0 = \frac{\eta_0(n,q) \sqrt{d} \sigma}{\|g\|}$.
\end{lemma}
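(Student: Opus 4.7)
The plan is to follow the proof scheme of the original Krum analysis. First I would split $[n]$ into the set $\CM$ of the (at least $n-q$) correct indices, for which $\tilde v_i = v_i$, and its complement of at most $q$ Byzantine indices. For every $i$, Krum's score decomposes as
\[
s(i) = \sum_{j \in \CM,\ i \to j} \|\tilde v_i - v_j\|^2 + \sum_{j \notin \CM,\ i \to j} \|\tilde v_i - b_j\|^2.
\]
The first technical move is a distance-swap: because each Byzantine vector retained in the sum is among the $n-q-2$ nearest neighbours of $\tilde v_i$, its distance to $\tilde v_i$ is dominated by $\|\tilde v_i - v_\ell\|$ for any correct $\ell$ that sits farther, so every Byzantine term can be replaced by a correct-to-correct term at the price of a combinatorial multiplicity.

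Next I would produce an upper bound for $\min_i s(i)$ in expectation. Fix any correct $i^* \in \CM$; the i.i.d.\ assumption together with $\E\|G-g\|^2 = d\sigma^2$ gives $\E\|v_{i^*} - v_j\|^2 = 2 d \sigma^2$ for each independent correct $j$. The correct-correct part of $s(i^*)$ therefore contributes at most $(n-q-2)\cdot 2 d \sigma^2$ in expectation; the Byzantine part, after one swap, contributes at most $q$ further correct-to-correct distances, and re-applying the swap when the correct substitute for a Byzantine neighbour happens itself to be far yields the cross-terms $q(n-q-2)$ and $q^2(n-q-1)$. Dividing by the surplus $n-2q-2$ of correct candidates (the number left free after excluding the pivot and the comparison index) and collecting constants reproduces exactly the coefficient $\eta_0^2(n,q)$. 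Because Krum selects the $k$ that minimises $s(\cdot)$, the same bound transfers to $s(k)$.

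Condition (i) of Definition~\ref{def:byz} then follows from the score bound. If $k \in \CM$ then $\E[\tilde v_k] = g$ and $\ip{\E[\tilde v_k]}{g} = \|g\|^2$; if $k \notin \CM$, the score bound forces some correct neighbour $v_{j^*}$ of $\tilde v_k$ to satisfy $\E\|\tilde v_k - v_{j^*}\|^2 \leq \eta_0^2(n,q)\, d\sigma^2$, so $\E[\tilde v_k]$ lies within Euclidean distance $\eta_0(n,q)\sqrt{d}\sigma$ of $g$. Under the hypothesis $\eta_0(n,q)\sqrt{d}\sigma < \|g\|$, elementary geometry yields $\ip{\E[\tilde v_k]}{g} \geq \bigl(1 - \eta_0(n,q)\sqrt{d}\sigma/\|g\|\bigr)\|g\|^2 = (1-\sin\alpha_0)\|g\|^2 > 0$, which is exactly (i).

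For condition (ii) I would bound $\|\tilde v_k\|^r \leq (\|\tilde v_k - v_{j^*}\| + \|v_{j^*}\|)^r$ for a correct neighbour $j^*$, expand binomially for $r \in \{2,3,4\}$, and invoke the score bound together with the independence of the $v_j$'s to rewrite each expectation as a sum of products $\E\|G\|^{r_1}\cdots\E\|G\|^{r_{n-q}}$ with $\sum r_\ell = r$, exactly as required. The main obstacle is the second step: the careful bookkeeping of which Byzantine indices are substituted by which correct ones and in what order, since this is precisely what produces the non-obvious form of $\eta_0^2(n,q)$ and, in particular, the $n-2q-2$ in the denominator. Once that combinatorial argument is in place, the inner-product inequality and the moment bound are both routine consequences.
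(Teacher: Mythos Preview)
The paper does not prove this lemma at all: it is stated as a cited result from \citet{blanchard2017machine} and no argument for it appears anywhere in the body or the appendix. There is therefore nothing to compare your proposal against in this paper; what you have sketched is, in outline, the original Krum argument from Blanchard et al., and if you want to verify the combinatorics producing $\eta_0^2(n,q)$ and the $n-2q-2$ denominator you will have to consult that reference rather than the present one.
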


The generalized Byzantine model is denoted as:
\begin{definition}[Generalized Byzantine Model]
\begin{align}
(\tilde{v}_i)_j = 
\begin{cases}
(v_i)_j, \mbox{if the the $j$th dimension of $v_i$ is correct,}\\
arbitrary, \mbox{otherwise},
\end{cases}
\label{equ:byz_model}
\end{align}
where $(v_i)_j$ is the $j$th dimension of the vector $v_i$.
\end{definition}

Based on the Byzantine model above, we introduce a generalized Byzantine resilience property, dimensional $(\alpha,q)$-Byzantine resilience, which is defined as follows:
\begin{definition} 
\label{def:dim_byz}
(Dimensional $(\alpha,q)$-Byzantine Resilience). Let $0 \leq \alpha < \pi/2$ be any angular value, and any integer $0 \leq q \leq n$. Let $\{v_i: i \in [n]\}$ be any i.i.d. random vectors in $\R^d$, $v_i \sim G$, with $\E[G] = g$. Let $\{\tilde{v}_i: i \in [n]\}$ be the set of vectors. For each dimension, up to $q$ of the $n$ values are replaced by arbitrary values, i.e., for dimension $j \in [d]$, $q$ of $\{(\tilde{v}_i)_j: i \in [n]\}$ are Byzantine, where $(\tilde{v}_i)_j$ is the $j$th dimension of the vector $\tilde{v}_i$. Aggregation rule $Aggr(\cdot)$ is said to be dimensional $(\alpha, q)$-Byzantine resilient if 
$
Aggr(\{\tilde{v}_i: i \in [n]\})
$
satisfies (i) $\ip{\E[Aggr]}{g} \geq (1-\sin \alpha) \|g\|^2 > 0$ and (ii) for $r = 2,3,4$, $\E\|Aggr\|^r$ is bounded above by a linear combination of terms $\E\|G\|^{r_1}, \ldots, \E\|G\|^{r_{n-q}}$ with $r_1 + \ldots + r_{n-q} = r$.
\end{definition}

Note that classic $(\alpha,q)$-Byzantine resilience is a special case of dimensional $(\alpha,q)$-Byzantine resilience. For classic Byzantine resilience defined in Definition~\ref{def:byz}, all the Byzantine values must lie in the same subset of workers, as shown in Figure~\ref{fig:viz_byz}(a).

In the following theorems, we show that \textit{Mean} and \textit{Krum} are not dimensional Byzantine resilient. The proofs are provided in the appendix.
\begin{theorem}
\label{thm:mean_dim_byz}
Averaging is not dimensional Byzantine resilient.
\end{theorem}

\begin{theorem}
\label{thm:dim_byz}
Any aggregation rule $Aggr(\{\tilde{v}_i: i \in [n]\})$ that outputs $Aggr \in \{\tilde{v}_i: i \in [n]\}$ is not dimensional Byzantine resilient.
\end{theorem}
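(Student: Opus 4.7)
The plan is to exhibit a single adversarial configuration under the generalized Byzantine model for which any aggregator that outputs one of the input vectors must produce a vector with an arbitrarily large coordinate, thereby violating condition (ii) of Definition~\ref{def:dim_byz}.

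First, I would fix parameters $d \geq n$ and $q \geq 1$, and choose a distribution $G$ with $\E[G] = g \neq 0$ and $\E\|G\|^r < \infty$ for $r = 2, 3, 4$ (e.g.\ a non-degenerate Gaussian). The right-hand side of condition (ii) is then a finite constant depending only on $G$, $n$, and $q$, and in particular is independent of the adversary's choices.

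Next, I would specify a diagonal attack. For each $i \in [n]$, the adversary leaves $\tilde{v}_i$ equal to $v_i$ on every coordinate except coordinate $i$, which is replaced by a single value $B \in \R$ to be chosen later. This respects the constraint in~\eqref{equ:byz_model}: for $j \leq n$ exactly one entry in column $j$ is corrupted, and for $j > n$ none are, so at most $q$ Byzantine values appear in each dimension.

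The key observation is that any aggregator satisfying $Aggr \in \{\tilde{v}_i : i \in [n]\}$ must return some $\tilde{v}_k$, and by construction the $k$-th coordinate of $\tilde{v}_k$ equals $B$, independently of which index $k$ the rule selects (possibly as a random, data-dependent function). Therefore $\|Aggr\|^2 \geq B^2$ pointwise, and hence $\E\|Aggr\|^2 \geq B^2$. Letting $B \to \infty$ makes $\E\|Aggr\|^2$ exceed any fixed linear combination of $\E\|G\|^{r_1}, \ldots, \E\|G\|^{r_{n-q}}$ with $\sum r_\ell = 2$, contradicting clause (ii) of Definition~\ref{def:dim_byz}. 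Hence no angular value $\alpha \in [0, \pi/2)$ can make the rule dimensional $(\alpha, q)$-Byzantine resilient.

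The step I expect to require the most care is the argument that the aggregator's data-dependent choice of index $k$ cannot escape the lower bound: the subtlety is that $k$ may be correlated with the clean vectors and with $B$ itself. However, because every candidate $\tilde{v}_k$ has its $k$-th entry hard-wired to $B$ by the construction, the bound $\|\tilde{v}_k\|^2 \geq B^2$ is uniform in $k$, so it survives both the random selection and the outer expectation with no further work.
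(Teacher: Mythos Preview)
Your proposal is correct. Both your argument and the paper's rely on the same diagonal attack—corrupting coordinate $i$ of vector $i$, so that every candidate output carries one contaminated entry—but you target a different clause of Definition~\ref{def:dim_byz}. The paper plants a large negative value in each diagonal entry and argues that $\ip{\E[Aggr]}{g}$ can be driven arbitrarily negative, violating condition~(i). You instead plant an arbitrary constant $B$ and observe that $\|Aggr\|^2 \geq B^2$ deterministically, violating condition~(ii). Your route is arguably the cleaner of the two: it avoids any sign-alignment issue between the corrupted coordinate and the corresponding entry of $g$, and the pointwise lower bound is uniform over the (possibly random, data-dependent) selected index $k$, so the expectation step is immediate. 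Both arguments implicitly require $d \geq n$ for the diagonal construction to make sense; you state this assumption, while the paper leaves it tacit.
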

Note that \textit{Krum} chooses the vector $v \in \{\tilde{v}_i: i \in [n]\}$ with the minimal score. Thus, based on the theorem above, we obtain the following corollary.
\begin{corollary}
\label{lem:krum_dim_byz}
$Krum(\cdot)$ is not dimensional Byzantine resilient.
\end{corollary}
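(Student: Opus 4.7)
The plan is to invoke Theorem~\ref{thm:dim_byz} directly, with $Aggr = Krum$. The first step is to unpack Definition~\ref{def:krum}: Krum produces $\tilde{v}_k$, where $k$ is the index whose sum of squared distances to its $n-q-2$ nearest neighbours in $\{\tilde{v}_i : i\in[n]\}$ is smallest. The crucial structural observation is therefore that $Krum(\{\tilde{v}_i: i\in[n]\}) \in \{\tilde{v}_i : i \in [n]\}$, i.e., Krum is a \emph{selector}: it always returns one of its inputs unchanged. This is exactly the hypothesis required by Theorem~\ref{thm:dim_byz}, so the second step is a one-line deduction: applying the theorem to $Aggr = Krum$ yields the corollary.

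The only minor subtlety I would flag in the write-up is notational. The symbol $q$ in Definition~\ref{def:krum} refers to an upper bound on the number of fully Byzantine workers and is baked into the neighborhood size $n-q-2$; the $q$ in Definition~\ref{def:dim_byz} counts per-coordinate corruptions. These two usages need not coincide, but Theorem~\ref{thm:dim_byz} makes no assumption on the per-coordinate budget beyond $q \in [0,n]$, so whatever neighborhood size Krum happens to use, the generalized adversary is free to choose a coordinate-wise attack pattern witnessing the failure of dimensional resilience through the selection argument.

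There is no genuine obstacle here: the corollary is an immediate instantiation of Theorem~\ref{thm:dim_byz}. The conceptual content worth emphasizing for the reader --- rather than proving in detail --- is that \emph{any} selection-style aggregator, however cleverly it scores candidates, is defeated by the generalized Byzantine adversary, because the adversary can corrupt a different coordinate in each worker while still respecting the per-dimension budget, leaving no uncorrupted input vector available for Krum to pick.
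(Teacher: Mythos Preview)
Your proposal is correct and matches the paper's own argument: the paper simply observes that \textit{Krum} selects one of the input vectors $\tilde{v}_k \in \{\tilde{v}_i : i \in [n]\}$ and then invokes Theorem~\ref{thm:dim_byz} directly. The additional remarks you make about the two uses of $q$ and the intuition behind the selector attack are accurate and helpful but go beyond what the paper states.
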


If an aggregation rule is dimensional/classic $(\alpha,q)$-Byzantine resilient with satisfied assumptions, it converges to critical points almost surely, by reusing the Proposition 2 in \cite{blanchard2017machine}. We provide the following lemma without proof.
\begin{lemma}[\citet{blanchard2017machine}]
Assume that (i) the cost function $f$ is three times differentiable with continuous derivatives, and is non-negative,  $f(x) \geq 0$; (ii) the learning rates satisfy $\sum_t \gamma_t = \infty$ and $\sum_t \gamma^2_t < \infty$; (iii) the gradient estimator satisfies $\E[\nabla f(x, \xi)] = \nabla F (x)$ and $\forall r \in \{2,3,4\}$, $\E\|\nabla f(x,\xi)\|^r \leq A_r + B_r\|x\|^r$ for some constants $A_r$, $B_r$; (iv) there exists a constant $0 \leq \alpha < \pi/2$ such that for all x 
$
\eta(n, q) \sqrt{d} \sigma(x) \leq \|\nabla F(x)\| \sin\alpha,
$
where $d \sigma^2(x) = \E \|\nabla f(x, \xi) - \nabla F(x)\|^2$; (v) finally, beyond a certain horizon, $\|x\|^2 \geq D$, there exist $\epsilon > 0$ and $0 \leq \beta < \pi/2 - \alpha$ such that $\|\nabla F(x)\| \geq \epsilon > 0$, and $\frac{\ip{x}{\nabla F(x)}}{\|x\| \cdot \|\nabla F(x)\|} \geq \cos\beta$.
Then the sequence of gradients $\nabla F(x^t)$ converges almost surely to zero, if the aggregation rule satisfies $(\alpha, q)$-Byzantine Resilience defined in Definition~\ref{def:byz} or \ref{def:dim_byz}.
\end{lemma}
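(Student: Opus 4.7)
The plan is to follow Bottou's classical almost-sure convergence argument for SGD, adapted to the Byzantine-robust setting. The proof splits into (a) global confinement of the iterates $x^t$ and (b) local convergence of $\nabla F(x^t)$ to zero, with the Robbins--Siegmund quasi-martingale convergence theorem as the main analytic tool.

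For global confinement, I would take the Lyapunov function $h(x) = \|x\|^2$ and expand $\E[h(x^{t+1}) - h(x^t) \mid \FM_t]$, where $\FM_t$ is the natural filtration. This produces a drift term $-2\gamma^t \ip{x^t}{\E[Aggr]}$ plus a noise term $(\gamma^t)^2 \E[\|Aggr\|^2 \mid \FM_t]$. Byzantine resilience property (i) gives $\ip{\E[Aggr]}{\nabla F(x^t)} \geq (1-\sin\alpha)\|\nabla F(x^t)\|^2$, and combined with assumption (v) (angular alignment of $x$ and $\nabla F$ beyond the horizon $\|x\|^2 \geq D$) this makes the conditional drift nonpositive whenever $\|x^t\|^2 \geq D$. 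Byzantine resilience property (ii) with $r=2$, combined with the growth condition in assumption (iii), bounds the noise term by a polynomial in $\|x^t\|$. Applying Robbins--Siegmund to $h(x^t)$ then yields $\sup_t \|x^t\| < \infty$ almost surely.

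For local convergence, I would switch to the Lyapunov function $f$ itself. On the almost-surely bounded orbit, a Taylor expansion together with assumption (i) gives
\begin{align*}
\E[f(x^{t+1}) - f(x^t) \mid \FM_t] \leq -\gamma^t (1-\sin\alpha) \|\nabla F(x^t)\|^2 + C(\gamma^t)^2,
\end{align*}
where the leading negative term comes from Byzantine resilience (i) and the remainder from (ii) combined with assumption (iii). A second application of Robbins--Siegmund then yields $f(x^t) \to f_\infty$ almost surely and $\sum_t \gamma^t \|\nabla F(x^t)\|^2 < \infty$ almost surely.

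To upgrade $\sum_t \gamma^t \|\nabla F(x^t)\|^2 < \infty$ with $\sum_t \gamma^t = \infty$ to $\nabla F(x^t) \to 0$, I would invoke a standard oscillation argument: Lipschitz-type continuity of $\nabla F$ on the confinement set (from assumption (i)) prevents $\|\nabla F(x^t)\|$ from jumping back above a fixed threshold infinitely often without contributing divergent mass to the already-convergent series. The main obstacle is that Byzantine resilience (ii) only bounds $\E\|Aggr\|^r$ by a linear combination of $\E\|G\|^{r_i}$ with coefficients depending on $n$, $q$, and $\alpha$; to feed this into Robbins--Siegmund one must carefully propagate the polynomial growth $\E\|G\|^r \leq A_r + B_r\|x\|^r$ from assumption (iii) through each such linear combination to certify summability of the total noise. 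In particular, the $r=3,4$ moment bounds are used to control the variance of the Lyapunov increment (not just its mean), which is what makes the final convergence statement almost sure rather than merely in expectation.
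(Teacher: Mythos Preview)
The paper does not prove this lemma at all: it is stated with the explicit remark ``We provide the following lemma without proof,'' citing \citet{blanchard2017machine} (their Proposition~2). Your proposal is therefore not comparable to anything in the present paper. That said, your sketch is faithful to the original argument in \citet{blanchard2017machine}, which is itself an adaptation of Bottou's almost-sure SGD analysis: first confinement via a quadratic Lyapunov function and the angular condition~(v), then Robbins--Siegmund on $f$ to get $\sum_t \gamma^t \|\nabla F(x^t)\|^2 < \infty$, and finally the oscillation argument to upgrade to $\nabla F(x^t)\to 0$. The use of the higher moments $r=3,4$ from resilience condition~(ii) to control the Lyapunov-increment variance is exactly how Blanchard et~al.\ proceed, so your plan is sound and matches the cited source even though this paper itself supplies nothing to compare against.
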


\section{Median-based Aggregation}
\label{sec:median}

With the Byzantine failure model defined in Equation~(\ref{equ:byz_worker}) and (\ref{equ:byz_model}), we propose three median-based aggregation rules, which are Byzantine resilient under certain conditions. 

\subsection{Geometric Median}
The geometric median is used as a robust estimator of mean~\citep{chen2017distributed}. 
\begin{definition}
\label{def:geomed}
The geometric median of $\{\tilde{v}_i: i \in [n]\}$, denoted by $GeoMed(\{\tilde{v}_i: i \in [n]\})$, is defined as 
\begin{align*}
\lambda = GeoMed(\{\tilde{v}_i: i \in [n]\}) = \argmin_{v \in \R^d} \sum_{i=1}^n \|v - \tilde{v}_i\|.
\end{align*}
\end{definition}

The following theorem shows the classic $(\alpha_1, q)$-Byzantine resilience of geometric median. A proof is provided in the appendix.

\begin{theorem}
Let $v_1, \ldots, v_n$ be any i.i.d. random $d$-dimensional vectors s.t. $v_i \sim G$, with $\E[G] = g$ and $\E\|G-g\|^2 = d\sigma^2$. $q$ of $\{v_i: i \in [n]\}$ are replaced by arbitrary $d$-dimensional vectors $b_1, \ldots, b_q$. If $q \leq \ceil{\frac{n}{2}}-1$ and $\eta_1(n, q)\sqrt{d}\sigma < \|g\|$, where $\eta_1(n, q) = \frac{2n-2q}{n-2q} \sqrt{n-q}$, then the $GeoMed$ function is classic $(\alpha_1, q)$-Byzantine resilient where $0 \leq \alpha_1 < \pi/2$ is defined by 
$\sin \alpha_1 = \frac{\eta_1(n,q) \sqrt{d} \sigma}{\|g\|}$.
\end{theorem}

\subsection{Marginal Median}
The marginal median is another generalization of one-dimensional median. 
\begin{definition}
\label{def:marmed}
We define the marginal median aggregation rule $MarMed(\cdot)$ as
\begin{align*}
\mu = MarMed(\{\tilde{v}_i: i \in [n]\}),
\end{align*}
where for any $j\in[d]$, the $j$th dimension of $\mu$ is $\mu_j = median\left(\{(\tilde{v}_1)_j, \ldots, (\tilde{v}_n)_j\}\right)$, $(\tilde{v}_i)_j$ is the $j$th dimension of the vector $\tilde{v}_i$, $median(\cdot)$ is the one-dimensional median.  
\end{definition}

The following theorem claims that by using $MarMed(\cdot)$, the resulting vector is dimensional $(\alpha_2, q)$-Byzantine resilient. A proof is provided in the appendix.

\begin{theorem}
Let $v_1, \ldots, v_n$ be any i.i.d. random $d$-dimensional vectors s.t. $v_i \sim G$, with $\E[G] = g$ and $\E\|G-g\|^2 = d\sigma^2$. For any dimension $j \in [d]$, $q$ of $\{(v_1)_j, \dots, (v_n)_j\}$ are replaced by arbitrary values, where $(v_i)_j$ is the $j$th dimension of the vector $v_i$. If $q \leq \ceil{\frac{n}{2}}-1$ and $\eta_2(n, q)\sqrt{d}\sigma < \|g\|$, where $\eta_2(n, q) = \sqrt{n-q}$, then the $MarMed$ function is dimensional $(\alpha_2, q)$-Byzantine resilient where $0 \leq \alpha_2 < \pi/2$ is defined by 
$\sin \alpha_2 = \frac{\eta_2(n,q) \sqrt{d} \sigma}{\|g\|}$.
\end{theorem}


\subsection{Beyond Median}
We can also utilize more values for each dimension along with the median, if $q$ is given or easily estimated.
To be more specific, for each dimension, we take the average of the $n-q$ values nearest to the median (including the median itself). We call the resulting aggregation rule \textit{``mean around median"}, which is defined as follows:
\begin{definition}
\label{def:meamed}
We define the mean-around-median aggregation rule $MeaMed(\cdot)$ as
\begin{align*}
\rho = MeaMed(\{\tilde{v}_i: i \in [n]\}),
\end{align*}
where for any $j\in[d]$, the $j$th dimension of $\rho$ is $\rho_j = \frac{1}{n-q} \sum_{\mu_j \rightarrow i} (\tilde{v}_i)_j$, $\mu_j \rightarrow i$ is the indices of the top-$(n-q)$ values lying in $\{(\tilde{v}_1)_j, \ldots, (\tilde{v}_n)_j\}$ nearest to the median $\mu_j$, $(\tilde{v}_i)_j$ is the $j$th dimension of the vector $\tilde{v}_i$.  
\end{definition}

We show that $MeaMed$ is dimensional $(\alpha_3, q)$-Byzantine resilient.
\begin{theorem}
Let $v_1, \ldots, v_n$ be any i.i.d. random $d$-dimensional vectors s.t. $v_i \sim G$, with $\E[G] = g$ and $\E\|G-g\|^2 = d\sigma^2$. For any dimension $j \in [d]$, $q$ of $\{(v_1)_j, \dots, (v_n)_j\}$ are replaced by arbitrary values, where $(v_i)_j$ is the $j$th dimension of the vector $v_i$. If $q \leq \ceil{\frac{n}{2}}-1$ and $\eta_3(n, q)\sqrt{d}\sigma < \|g\|$, where $\eta_3(n, q) = \sqrt{10(n-q)}$, then the $MeaMed$ function is dimensional $(\alpha_3, q)$-Byzantine resilient where $0 \leq \alpha_3 < \pi/2$ is defined by 
$\sin \alpha_3 = \frac{\eta_3(n,q) \sqrt{d} \sigma}{\|g\|}$.
\end{theorem}

The mean-around-median aggregation can be viewed as a trimmed average centering at the median, which filters out the values far away from the median.

\subsection{Time Complexity}
For geometric median $GeoMed(\cdot)$, there are no closed-form solutions. The $(1+\epsilon)$-approximate geometric median can be computed in $O(dn \log^3 \frac{1}{\epsilon})$ time~\citep{cohen2016geometric}, which is nearly linear to $O(dn)$. 
To compute the marginal median $MarMed(\cdot)$, we only need to compute the median value of each dimension. The simplest way is to apply any sorting algorithm to each dimension, which yields the time complexity $O(dn\log n)$. To obtian median values, there also exists an algorithm called \textit{selection algorithm}~\cite{blum1973time} with average time complexity $O(n)$~($O(n^2)$ in the worst case). Thus, we can get the marginal median with time complexity $O(dn)$ on average, which is in the same order of using mean value for aggregation. For $MeaMed(\cdot)$, the computation additional to computing the marginal median takes linear time $O(dn)$. Thus, the time complexity is the same as $MarMed(\cdot)$.
Note that for Krum and Multi-Krum, the time complexity is $O(dn^2)$~\cite{blanchard2017machine}.

\begin{table*}[htb!]
\caption{Experiment Summary}
\label{table:datasets}
\begin{center}
\begin{tabular}{|l|r|r|r|r|r|r|r|r|}
\hline 
Dataset & \# train & \# test  & $\gamma$ & \# rounds & Batchsize & Evaluation metric   \\ \hline 
MNIST~\cite{loosli2007training} & 60k & 10k & 0.1 & 500 & 32 & top-1 accuracy \\ \hline 
CIFAR10~\cite{krizhevsky2009learning} & 50k & 10k & 5e-4 & 4000 & 128 & top-3 accuracy \\ \hline 
\end{tabular} 
\end{center}
\end{table*}
\begin{figure*}[htb!]
\vspace*{-0.3cm}
\centering
\subfigure[MNIST without Byzantine]{\includegraphics[width=0.49\textwidth]{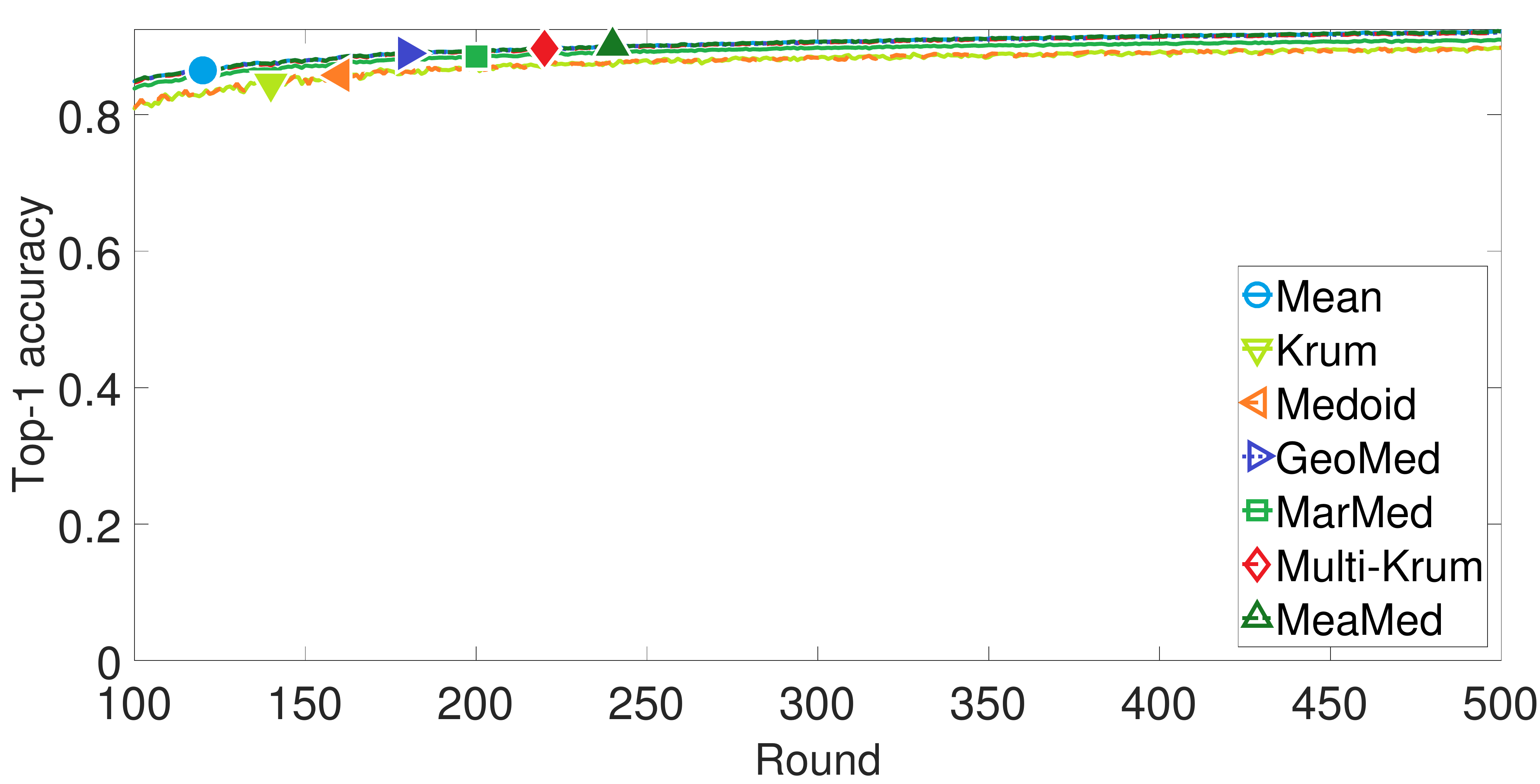}}
\subfigure[MNIST without Byzantine~(zoomed)]{\includegraphics[width=0.49\textwidth]{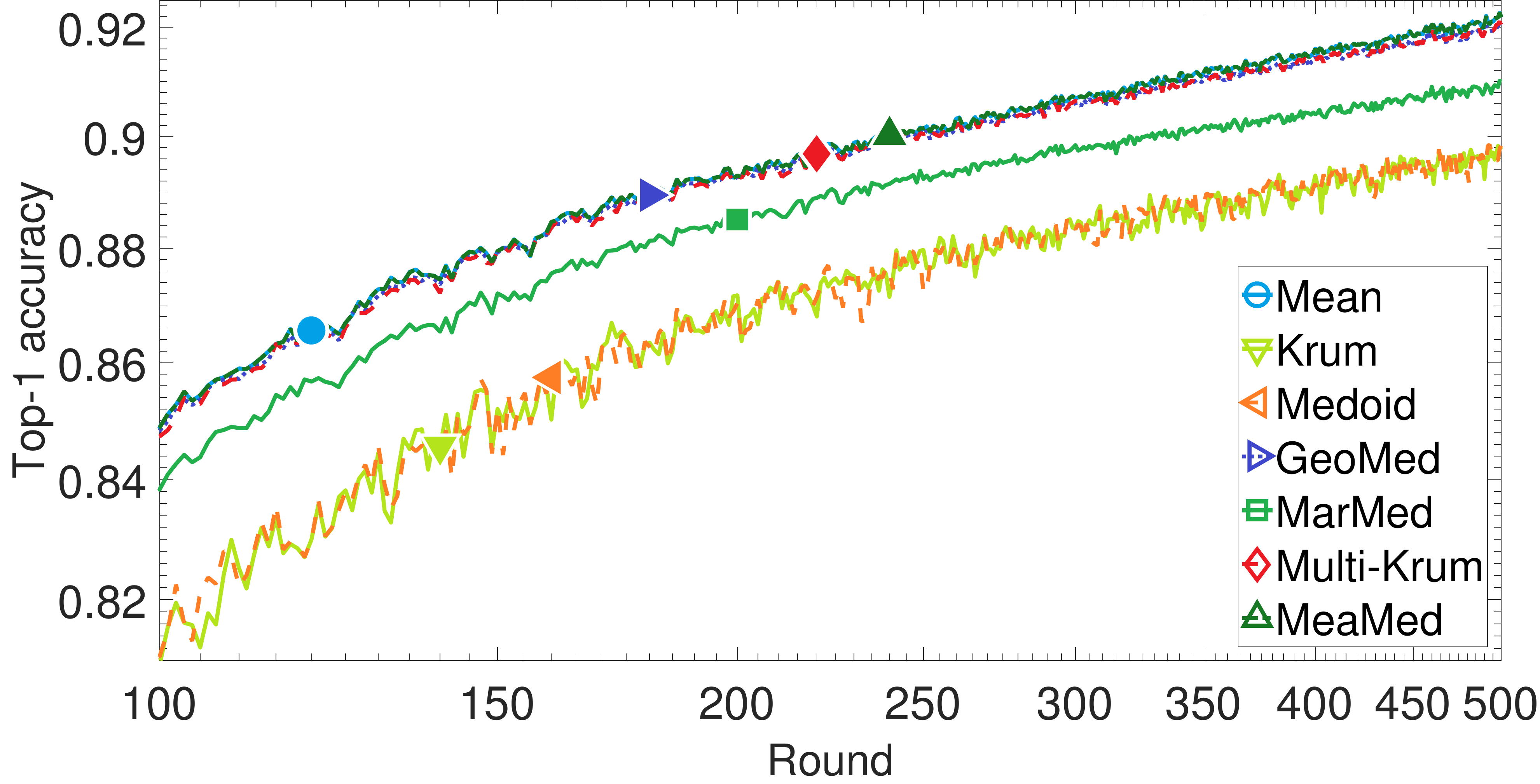}}
\caption{Top-1 accuracy of MLP on MNIST without Byzantine failures.}
\label{fig:mnist_nobyz}
\end{figure*}
\begin{figure*}[htb!]
\centering
\subfigure[MNIST with Gaussian]{\includegraphics[width=0.49\textwidth]{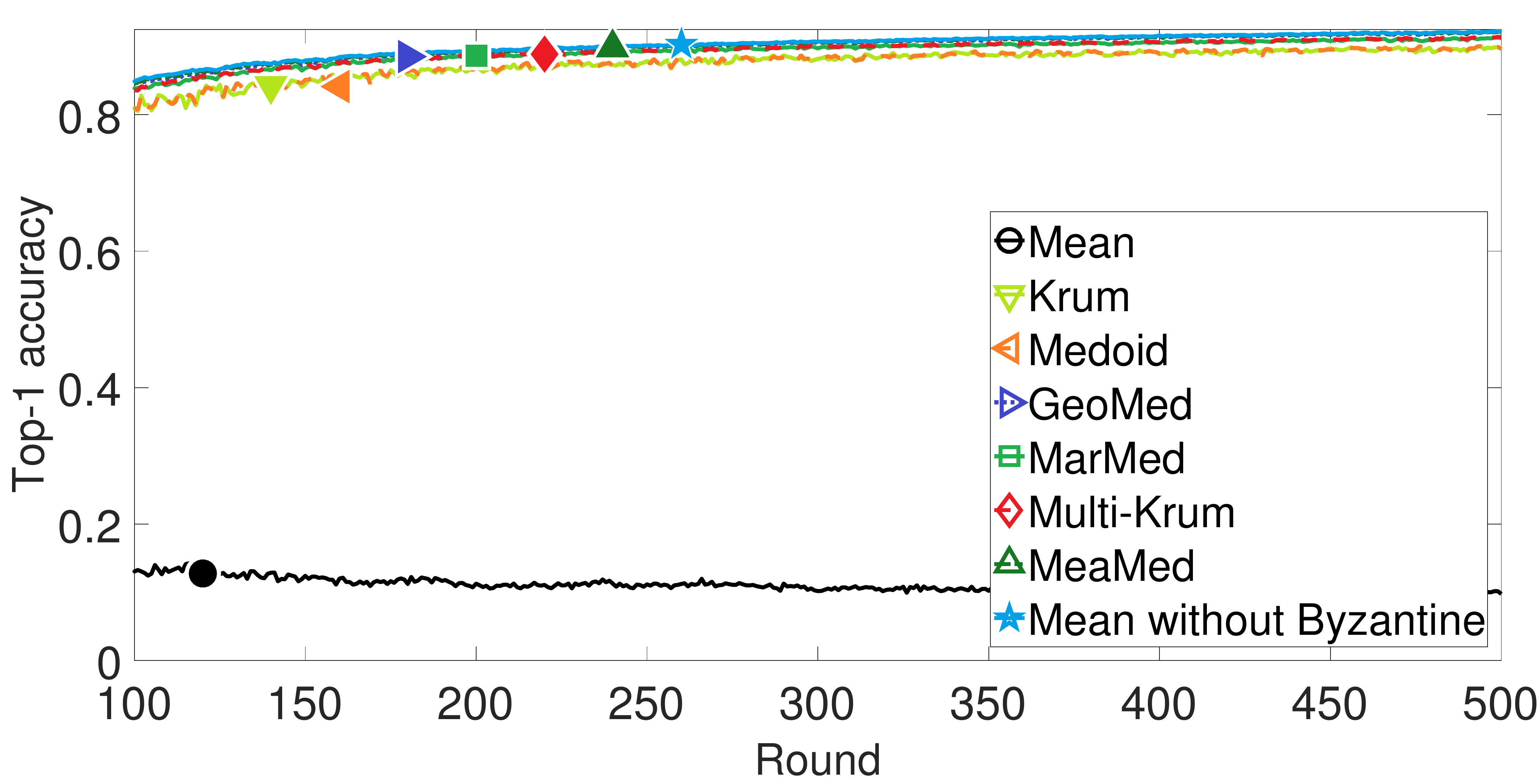}}
\subfigure[MNIST with Gaussian~(zoomed)]{\includegraphics[width=0.49\textwidth]{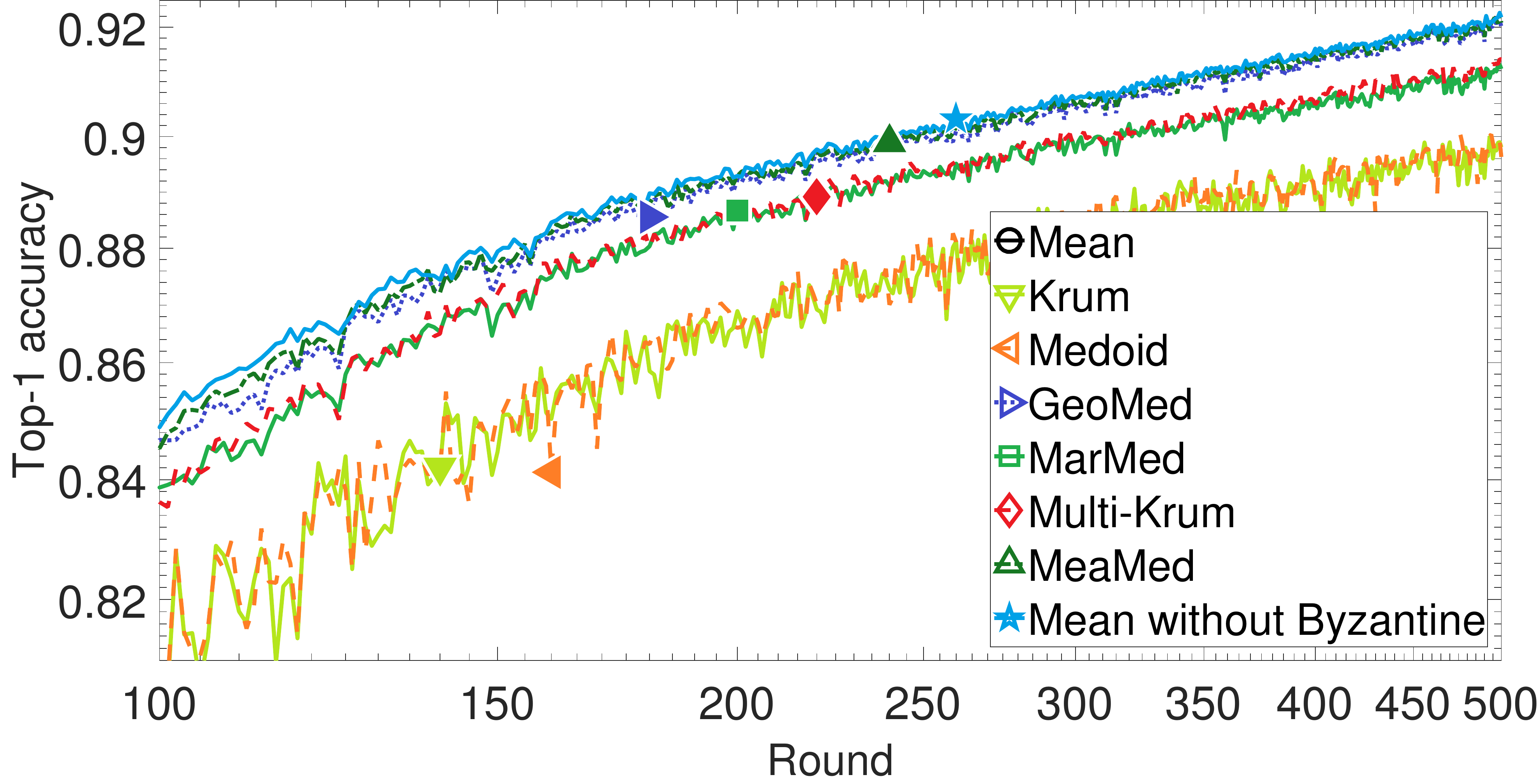}}
\caption{Top-1 accuracy of MLP on MNIST with Gaussian Attack. 6 out of 20 gradient vectors are replaced by i.i.d. random vectors drawn from a Gaussian distribution with 0 mean and 200 standard deviation.}
\label{fig:mnist_gaussian}
\end{figure*}
\begin{figure*}[htb!]
\centering
\subfigure[MNIST with omniscient]{\includegraphics[width=0.49\textwidth]{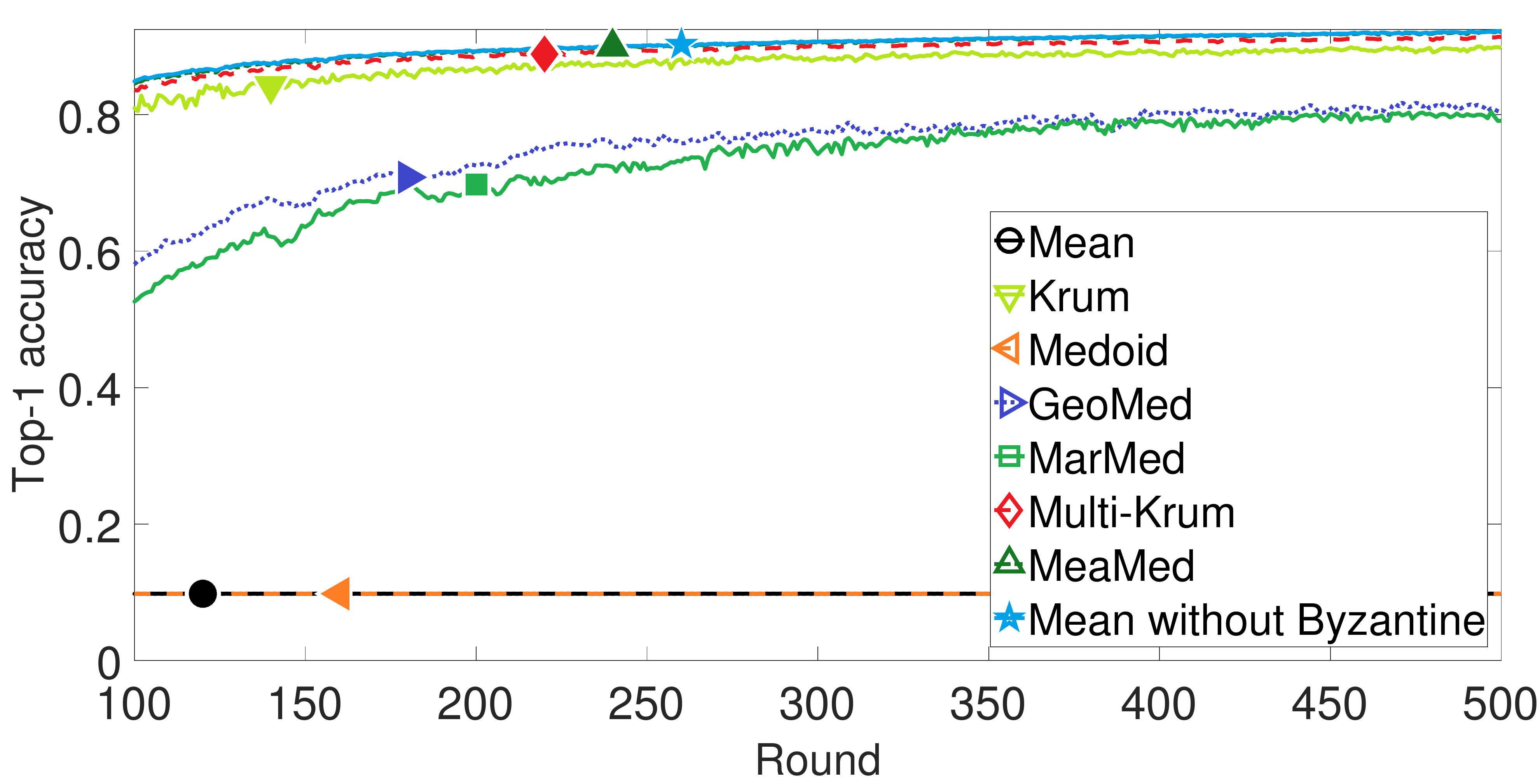}}
\subfigure[MNIST with omniscient~(zoomed)]{\includegraphics[width=0.49\textwidth]{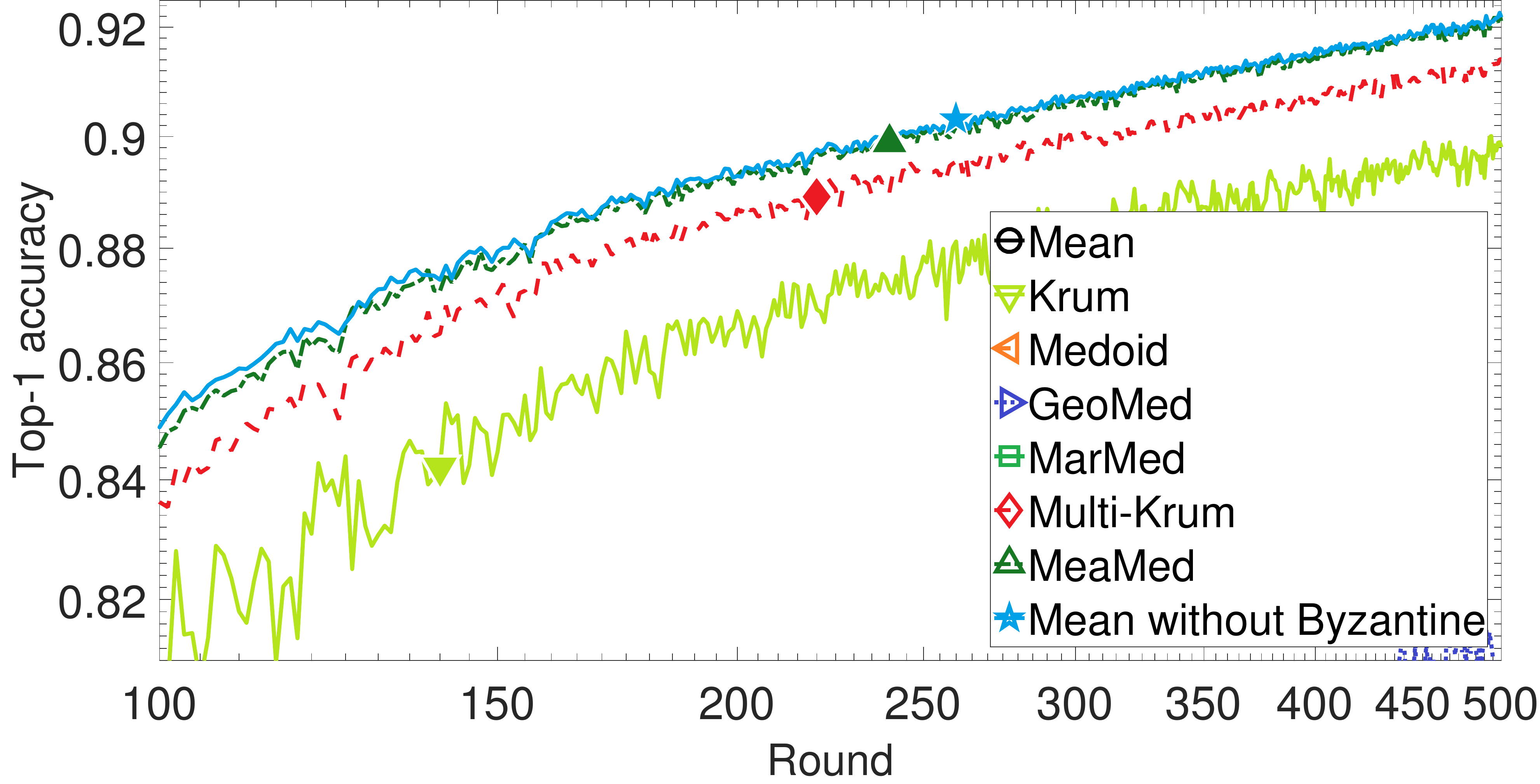}}
\caption{Top-1 accuracy of MLP on MNIST with Omniscient Attack. 6 out of 20 gradient vectors are replaced by the negative sum of all the correct gradients, scaled by a large constant~(1e20 in the experiments). }
\label{fig:mnist_omniscient}
\end{figure*}
\begin{figure*}[htb!]
\centering
\subfigure[MNIST with bit-flip]{\includegraphics[width=0.49\textwidth]{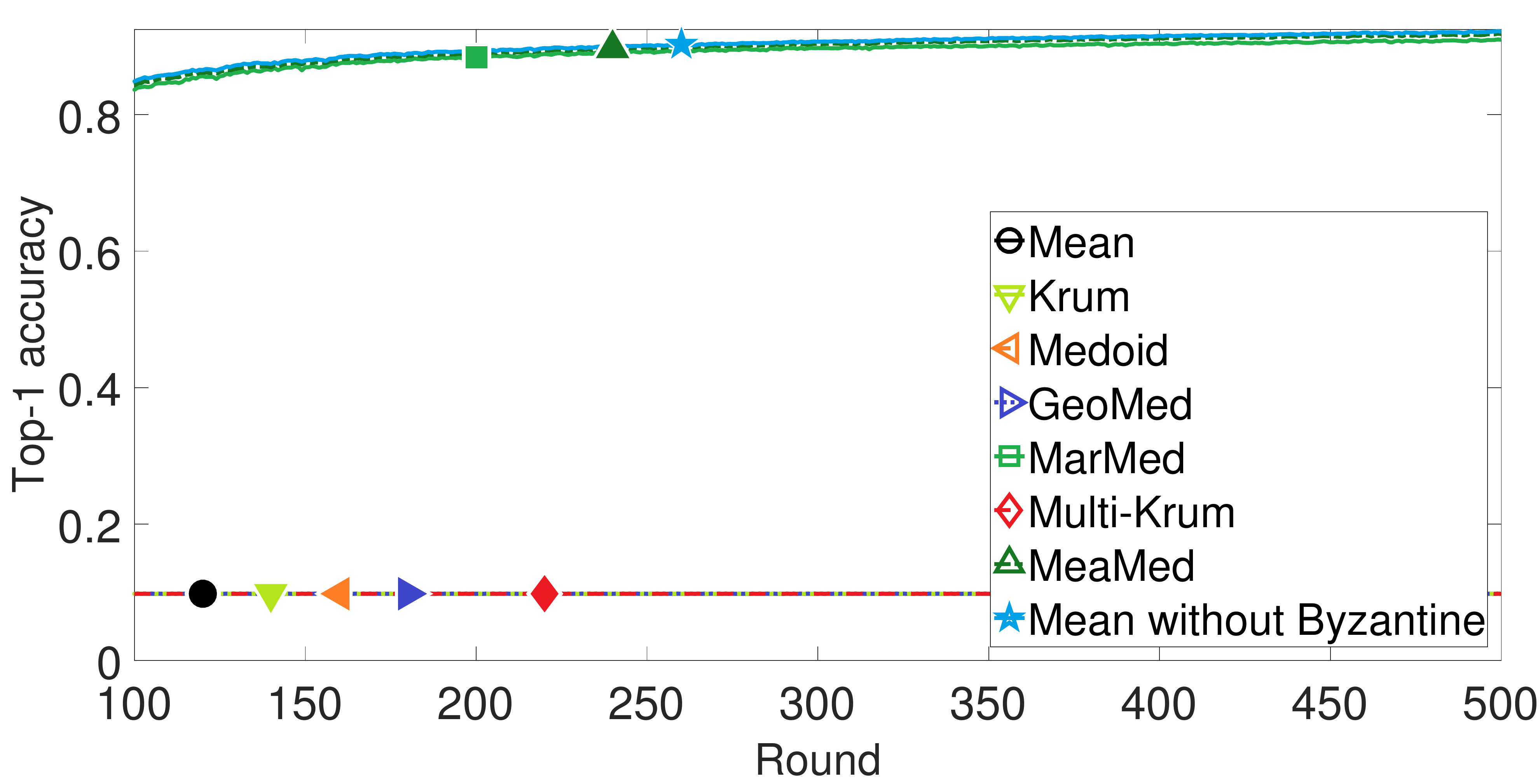}}
\subfigure[MNIST with bit-flip~(zoomed)]{\includegraphics[width=0.49\textwidth]{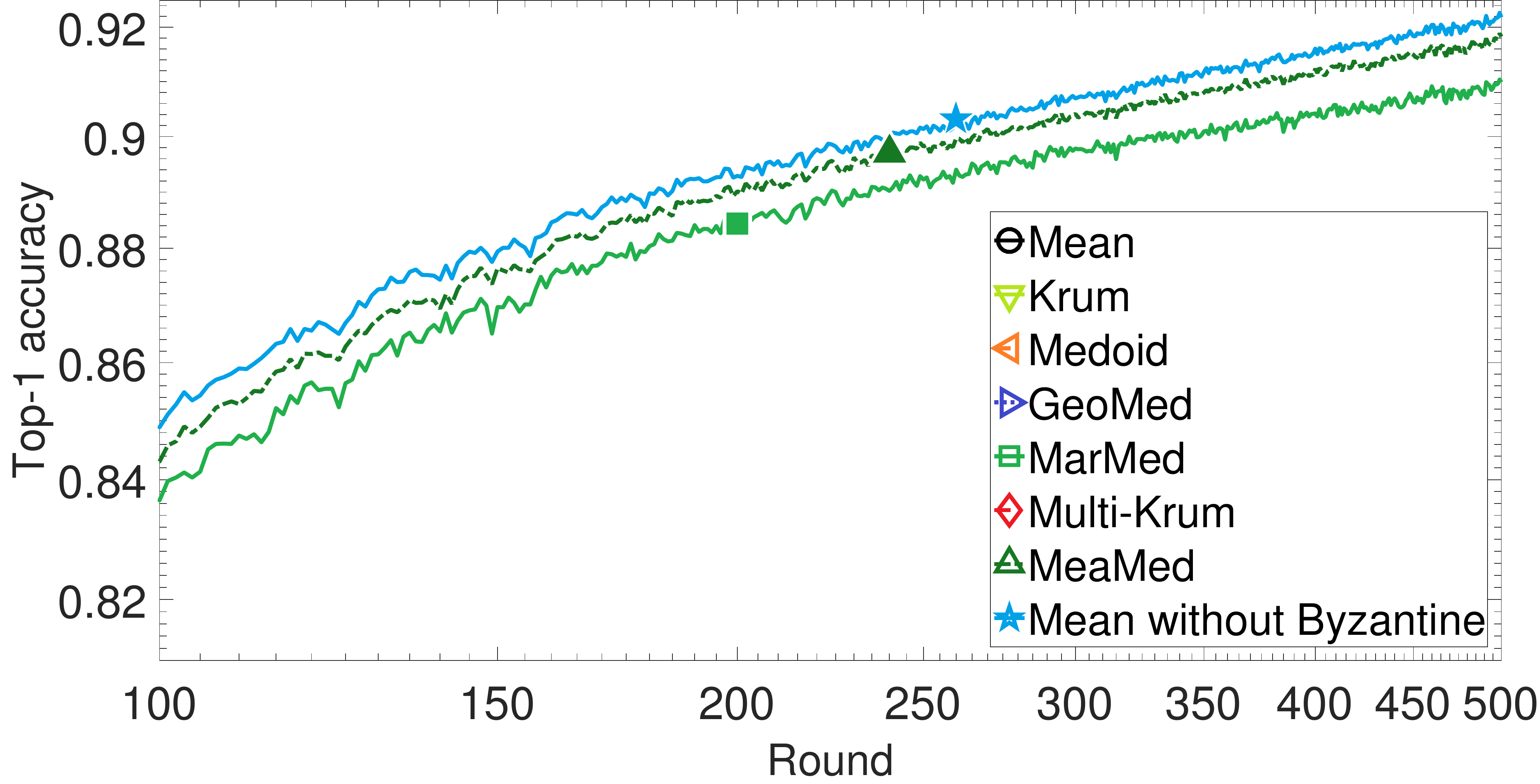}}
\caption{Top-1 accuracy of MLP on MNIST with Bit-flip Attack. For the first 1000 dimensions, 1 of the 20 floating numbers is manipulated by flipping the 22th, 30th, 31th and 32th bits.}
\label{fig:mnist_bitflip}
\end{figure*}
\begin{figure*}[htb!]
\centering
\subfigure[MNIST with gambler]{\includegraphics[width=0.49\textwidth]{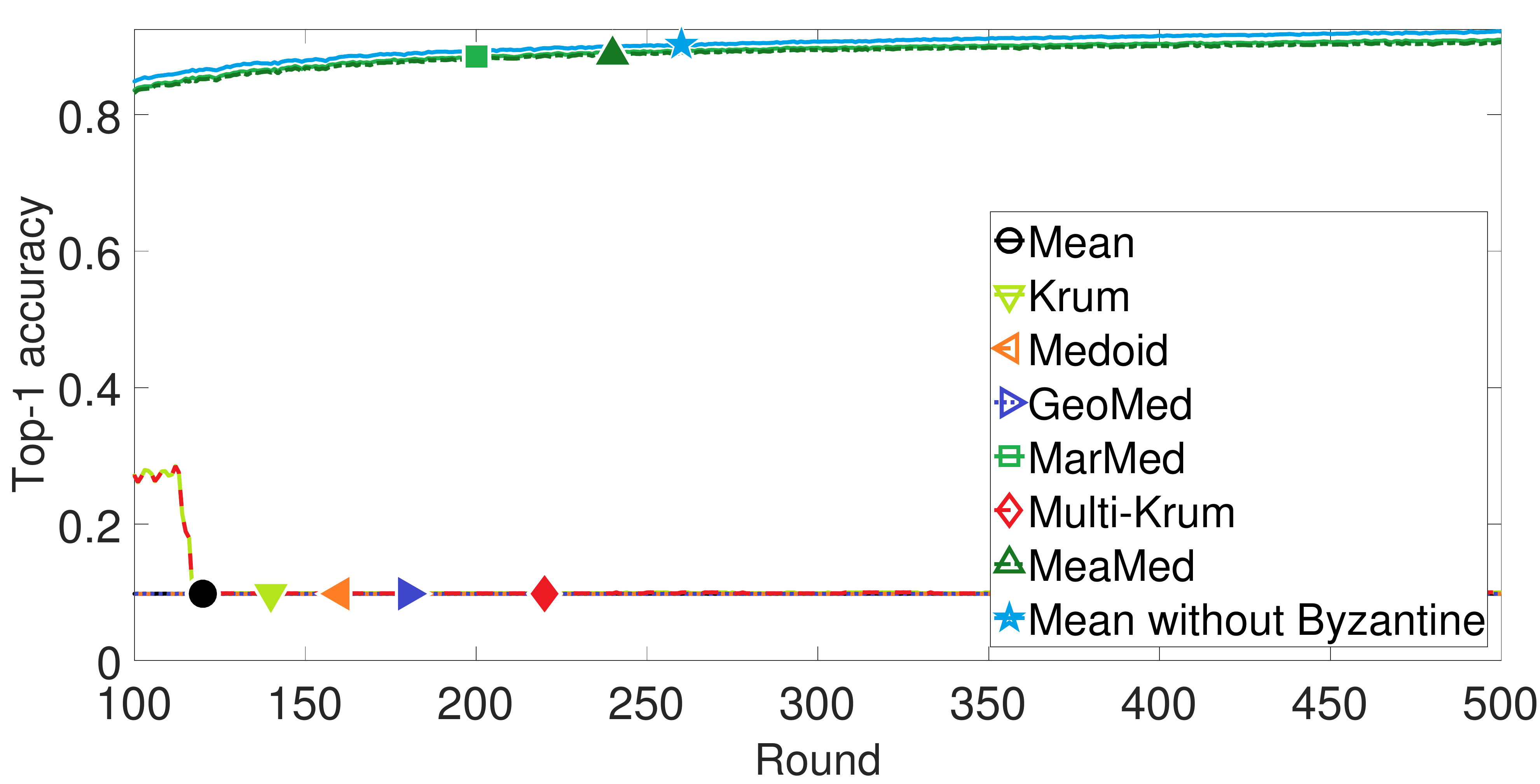}}
\subfigure[MNIST with gambler~(zoomed)]{\includegraphics[width=0.49\textwidth]{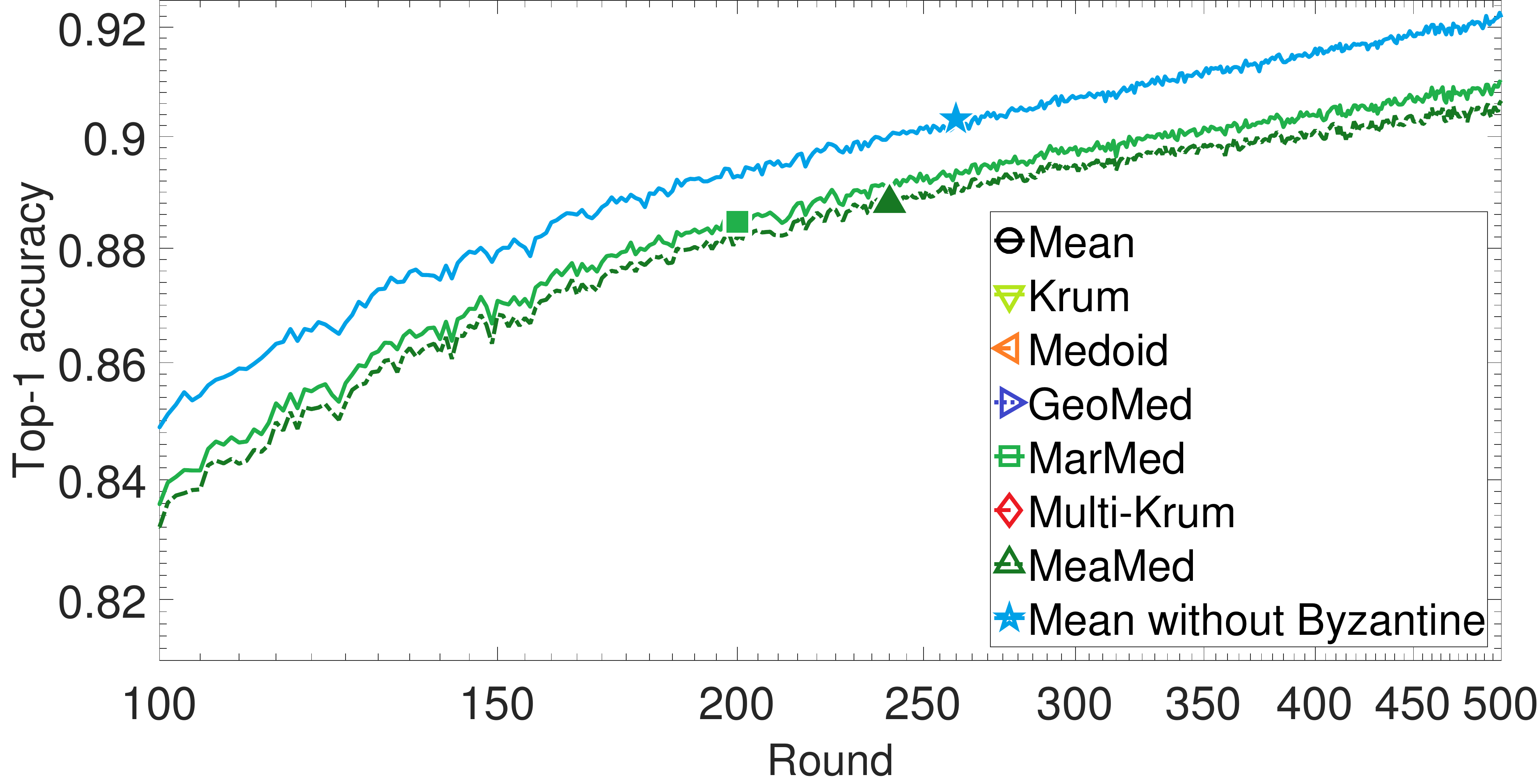}}
\caption{Top-1 accuracy of MLP on MNIST with gambler attack. The parameters are evenly assigned to 20 servers. For one single server, any received value is multiplied by $-1e20$ with probability 0.05\%.}
\label{fig:mnist_multiserver}
\end{figure*}
\begin{figure}[htb]
\centering
\includegraphics[width=0.49\textwidth]{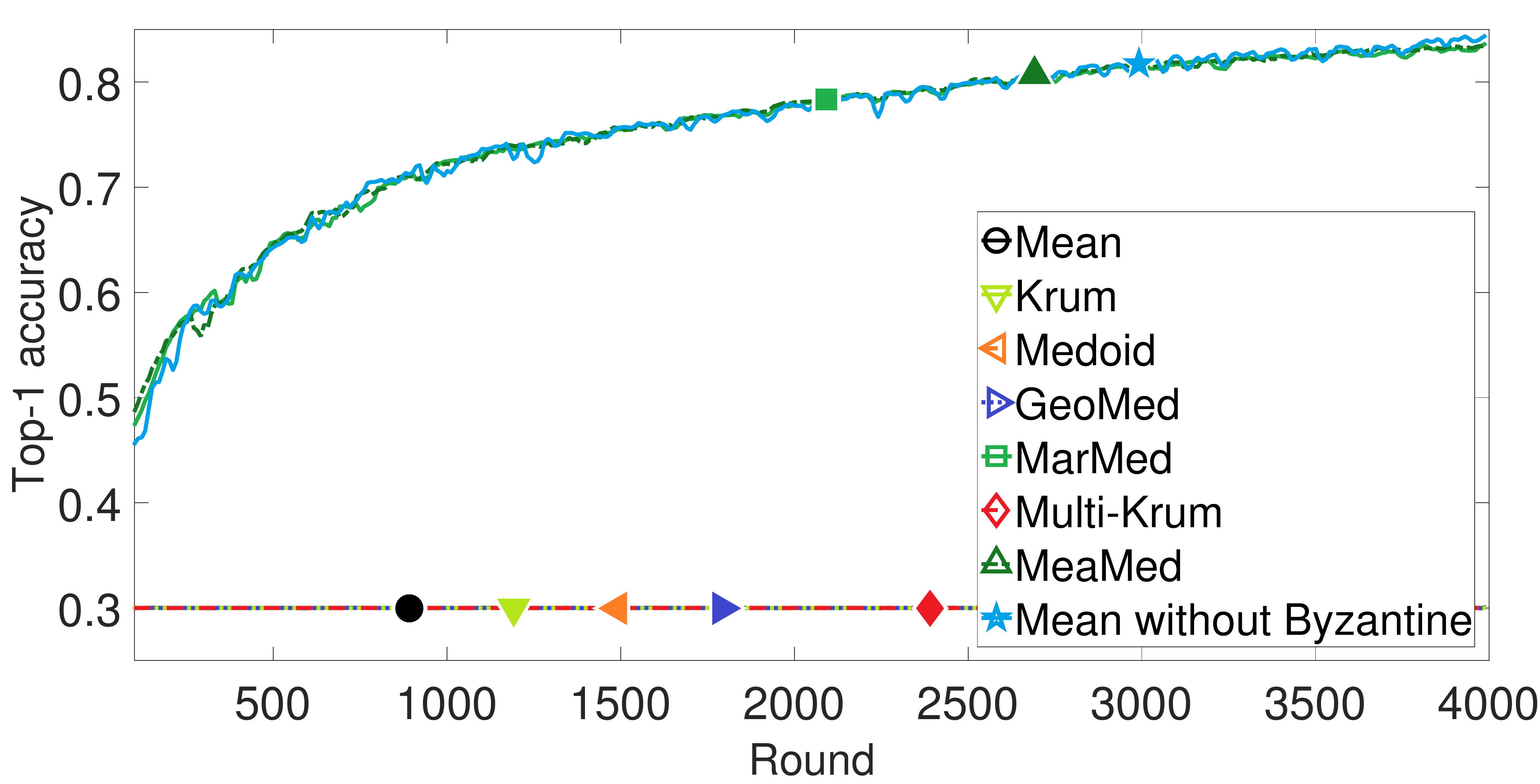}
\caption{Top-3 Accuracy of CNN on CIFAR10 with gambler.}
\label{fig:cifar10_multiserver}
\vspace{-0.4cm}
\end{figure}


\begin{figure}[htb]
\vspace{-0.3cm}
\centering
\includegraphics[width=0.45\textwidth]{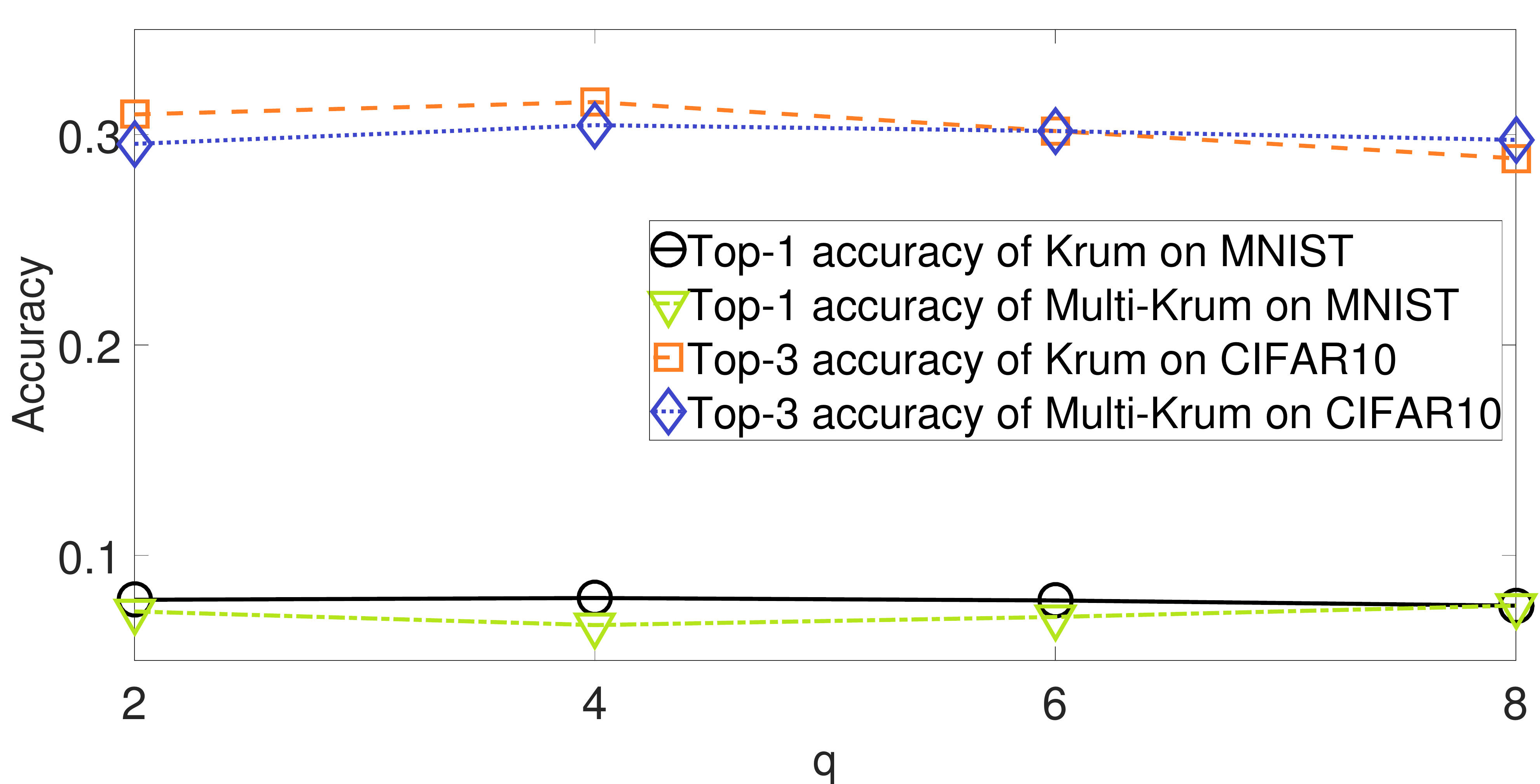}
\caption{Accuracy of $Krum$-based aggregations, at the end of training, when $q$ varies. With 20 servers, $q$ must satisfy $q \leq 8$.}
\label{fig:krum_q}
\vspace{-0.4cm}
\end{figure}

\section{Experiments}
\label{sec:experiments}
In this section, we evaluate the convergence and Byzantine resilience properties of the proposed algorithms. We consider two image classification tasks: handwritten digits classification on MNIST dataset using multi-layer perceptron~(MLP) with two hidden layers, and object recognition on convolutional neural network~(CNN) with five convolutional layers and two fully-connected layers. The details of these two neural networks can be found in the appendix. There are $n=20$ worker processes. We repeat each experiment for ten times and take the average. To make the conditions as fair as possible for all the algorithms, we ensure that all the algorithms are run with the same set of random seeds. The details of the datasets and the default hyperparameters of the corresponding models are listed in Table~\ref{table:datasets}. We use top-1 or top-3 accuracy on testing sets~(disjoint with the training sets) as evaluation metrics.

The baseline aggregation rules are \textit{Mean}, \textit{Medoid}, \textit{Krum}~(Definition~\ref{def:krum}), and \textit{Multi-Krum}. Medoid, defined as follows, is a computation-efficient version of geometric median.
\begin{definition}
\label{def:medoid}
The medoid of $\{\tilde{v}_i: i \in [n]\}$, denoted by $Medoid(\{\tilde{v}_i: i \in [n]\})$, is defined as 
$
Medoid(\{\tilde{v}_i: i \in [n]\}) = \argmin_{v \in \{\tilde{v}_i: i \in [n]\}} \sum_{i=1}^n \|v - \tilde{v}_i\|.
$
\end{definition}
Multi-Krum is a variant of Krum defined in \citet{blanchard2017machine}, which takes the average on several vectors selected by multiple rounds of Krum. 
We compare these baseline algorithms with the proposed algorithms: geometric median~(\textit{GeoMed} defined in Definition~\ref{def:geomed}), marginal median~(\textit{MarMed} defined in Definition~\ref{def:marmed}), and ``mean around median"~(\textit{MeaMed} defined in Definition~\ref{def:meamed}) under different settings in the following subsections.

Note that all the experiments of CNN on CIFAR10 show similar results with the experiments of MLP on MNIST. Thus, we only show the results of CNN in Section~\ref{sec:general_attack} as an example. The remaining results are provided in the appendix.

\subsection{Convergence without Byzantine Failures}
First, we evaluate the convergence without Byzantine failures. The goal is to empirically evaluate the bias and variance caused by the robust aggregation rules. 

In Figure~\ref{fig:mnist_nobyz}, we show the top-1 accuracy on the testing set of MNIST. 
The gaps between different algorithms are tiny. Among all the algorithms, \textit{Multi-Krum}, \textit{GeoMed}, and \textit{MeaMed} have the least bias. They act just the same as averaging. $MarMed$ converges slightly slower. \textit{Medoid} and \textit{Krum} both have slowest convergence. 

\subsection{Gaussian Attack}
We test classic Byzantine resilience in this experiment. 
We consider the attackers that replace some of the gradient vectors with Gaussian random vectors with zero mean and isotropic covariance matrix with standard deviation 200. We refer to this kind of attack as \textit{Gaussian Attack}. Within the figure, we also include the averaging without Byzantine failures as a baseline. 6 out of the 20 gradient vectors are Byzantine. The results are shown in Figure~\ref{fig:mnist_gaussian}.
As expected, averaging is not Byzantine resilient. The gaps between all the other algorithms are still tiny. \textit{GeoMed} and \textit{MeaMed} performs like there are no Byzantine failures at all. \textit{Multi-Krum} and \textit{MarMed} converges slightly slower. \textit{Medoid} and \textit{Krum} performs worst. Although \textit{Medoid} is not Byzantine resilient, the Gaussian attack is weak enough so that \textit{Medoid} is still effective.

\subsection{Omniscient Attack}
We test classic Byzantine resilience in this experiment. 
This kind of attacker is assumed to know the all the correct gradients. For each Byzantine gradient vector, the gradient is replaced by the negative sum of all the correct gradients, scaled by a large constant~(1e20 in the experiments). Roughly speaking, this attack tries to make the parameter server go into the opposite direction with a long step. 6 out of the 20 gradient vectors are Byzantine. The results are shown in Figure~\ref{fig:mnist_omniscient}. \textit{MeaMed} still performs just like there is no failure. \textit{Multi-Krum} is not as good as \textit{MeaMed}, but the gap is small.  \textit{Krum} converges slower but still converges to the same accuracy. However, \textit{GeoMed} and \textit{MarBed} converge to bad solutions. \textit{Mean} and \textit{Medoid} are not tolerant to this attack.

\subsection{Bit-flip Attack}
We test dimensional Byzantine resilience in this experiment. 
Knowing the information of other workers can be difficult in practice. Thus, we use more realistic scenario in this experiment. The attacker only manipulates some individual floating numbers by flipping the 22th, 30th, 31th and 32th bits. Furthermore, we test dimensional Byzantine resilience in this experiment. For each of the first 1000 dimensions, 1 of the 20 floating numbers is manipulated using the bit-flip attack. The results are shown in Figure~\ref{fig:mnist_bitflip}. As expected, only \textit{MarMed} and \textit{MeaMed} are dimensional Byzantine resilient.

Note that for \textit{Krum} and \textit{Multi-Krum}, their assumption requires the number of Byzantine vectors $q$ to satisfy $2q + 2 < n$, which means $q \leq 8$ in our experiments. However, because each gradient is partially manipulated, all the $n$ vectors are Byzantine, which breaks the assumption of the Krum-based algorithms. Furthermore, to compute the distances to the $(n-q-2)$-nearest neighbours, $n-q-2$ must be positive. To test the performance of \textit{Krum} and \textit{Multi-Krum}, we set $q = 8$ for these two algorithms so that they can still be executed. 
Furthermore, we test whether tuning $q$ can make a difference. The results are shown in Figure~\ref{fig:krum_q}. Obviously, whatever $q$ we use, \textit{Krum}-based algorithms get stuck around bad solutions. 

\subsection{General Attack with Multiple Servers}
\label{sec:general_attack}
We test general Byzantine resilience in this experiment. 
We evaluate the robust aggregation rules under a more general and realistic type of attack. It is very popular to partition the parameters into disjoint subsets, and use multiple server nodes to storage and aggregate them~\citep{Li2014ScalingDM,Li2014CommunicationED,Ho2013MoreED}. We assume that the parameters are evenly partitioned and assigned to the server nodes. The attacker picks one single server, and manipulates any floating number by multiplying $-1e20$, with probability of $0.05\%$. We call this attack \textit{gambler}, because the attacker randomly manipulate the values, and wish that in some rounds the assumptions/prerequisites of the robust aggregation rules are broken, which crashes the training. Such attack requires less global information, and can be concentrated on one single server, which makes it more realistic and easier to implement.

In Figure~\ref{fig:mnist_multiserver} and \ref{fig:cifar10_multiserver}, we evaluate the performance of all the robust aggregation rules under the gambler attack. The number of servers is $20$. For \textit{Krum}, \textit{Multi-Krum} and \textit{MeaMed}, the estimated Byzantine number $q$ is set as $8$. We also show the performance of averaging without Byzantine values as the benchmark. It is shown that only marginal median \textit{MarMed} and ``mean around median"~\textit{MeaMed} survive under this attack. The convergence is slightly slower than the averaging without Byzantine values, but the gaps are small. 

\subsection{Discussion}
As expected, \textit{mean} aggregation is not Byzantine resilient. Although \textit{medoid} is not Byzantine resilient, as proved by \citet{blanchard2017machine}, it can still make reasonable progress under some attacks such as Gaussian attack. \textit{Krum}, \textit{Multi-Krum}, and \textit{GeoMed} are classic Byzantine resilient but not dimensional Byzantine resilient. \textit{MarMed} and \textit{MeaMed} are dimensional Byzantine resilient. However, under omniscient attack, \textit{MarMed} suffers from larger variances, which slow down the convergence. 

The gambler attack shows the true advantage of dimensional Byzantine resilience: higher probability of survival. Under such attack, chances are that the assumptions/prerequisites of \textit{MarMed} and \textit{MeaMed} may still get broken. However, their probability of crashing is less than the other algorithms because dimensional Byzantine resilience generalizes classic Byzantine resilience. An interesting observation is that \textit{MarMed} is slightly better than \textit{MeaMed} under gambler attack. That is because the estimation of $q = 8$ is not accurate, which will cause some unpredictable behavior for \textit{MeaMed}. We choose $q = 8$ because it is the maximal value we can take for \textit{Krum} and \textit{Multi-Krum}. 

It is obvious that \textit{MeaMed} performs best in almost all the cases. \textit{Multi-Krum} is also good, except that it is not dimensional Byzantine resilient. The reason why \textit{MeaMed} and \textit{Multi-Krum} have better performance is that they utilize the extra information of the number of Byzantine values. Note that \textit{MeaMed} not only performs just as well as or even better than \textit{Multi-Krum}, but also has lower time complexity. 

Marginal median \textit{MarMed} has the cheapest computation. Its worst case, omniscient attack, is hard to implement in reality. Thus, for most applications, we suggest MarMed as an easy-to-implement aggregation rule with robust performance, which (importantly) does not require knowledge of the number of byzantine values.

\section{Related Works}

There are few papers studying Byzantine resilience for machine learning algorithms. Our work is closely related to \citet{blanchard2017machine}. Another paper~\cite{chen2017distributed} proposed grouped geometric median for Byzantine resilience, with strongly convex functions. 

Our approach offers the following important advantages over the previous work.
\setitemize[0]{leftmargin=*}
\begin{itemize}
\item \textbf{Cheaper computation compared to Krum.} Geometric median has nearly linear~(approximately $O(nd)$) time complexity~\cite{cohen2016geometric}. Marginal median and ``mean around median" have linear time complexity $O(nd)$ on average~\cite{blum1973time}, while the time complexity of Krum is $O(n^2 d)$.
\item \textbf{Less prior knowledge required.} Both geometric median and marginal median do not require $q$, the number of Byzantine workers, to be given, while Krum needs $q$ to calculate the sum of Euclidean distances of the $n-q-2$ nearest neighbours. Furthermore, when $q$ is known or well estimated, \textit{MeaMed} show better robustness than \textit{Krum} and \textit{Multi-Krum} in most cases.
\item \textbf{Dimensional Byzantine resilience.} Marginal median and ``mean around median" tolerate a more general type of Byzantine failures described in Equation~(\ref{equ:byz_model}) and Definition~\ref{def:dim_byz}, while Krum and geometric median can only tolerate the classic Byzantine failures described in Equation~(\ref{equ:byz_worker}) and Definition~\ref{def:byz}.
\item \textbf{Better support for multiple server nodes.} If the entire set of parameters is disjointly partitioned and stored on multiple server nodes, marginal median and ``mean around median" need no additional communication, while Krum and geometric median requires communication among the server nodes.
\end{itemize}

\section{Conclusion}
We investigate the generalized Byzantine resilience of parameter server architecture. We proposed three novel median-based aggregation rules for synchronous SGD. The algorithms have low time complexity and provable convergence to critical points. Our empirical results show good performance in practice. 

\newpage
\bibliography{byz}
\bibliographystyle{icml2018}

\newpage
\clearpage
\section{Appendix}
In the appendix, we introduce several useful lemmas and use them to derive the detailed proofs of the theorems in this paper.

\subsection{Dimensional Byzantine Resilience}
\setcounter{theorem}{0}

\begin{theorem}
\label{thm:mean_dim_byz}
Averaging is not dimensional Byzantine resilient.
\end{theorem}
\begin{proof}
We demonstrate a counter example.
Consider the case where
\begin{align}
\tilde{v}_i = 
\begin{cases}
v_i, &\forall i \in [n-1]\\
-g - \sum_{i=1}^{n-1} v_i, &i=n,
\end{cases}
\end{align}
where $g = \E[v_i]$, $\forall i \in [n]$. Thus, the resulting aggregation is $Aggr = -g/n$.
The inner product $\ip{\E[Aggr]}{g}$ is always negative under the Byzantine attack. Thus, SGD is not expectedly descendant, which means it will not converge to critical points. Note that in this counter example, the number of Byzantine values of each dimension is $1$.

Hence,  averaging is not dimensional $(\alpha, q)$-Byzantine resilient with $\forall \alpha, \forall q > 0$.
\end{proof}

\begin{theorem}
Any aggregation rule $Aggr(\{\tilde{v}_i: i \in [n]\})$ that outputs $Aggr \in \{\tilde{v}_i: i \in [n]\}$ is not dimensional Byzantine resilient.
\end{theorem}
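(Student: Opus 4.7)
The plan is to exhibit a counterexample: an adversarial corruption pattern that satisfies the dimensional Byzantine constraint for $q=1$, yet forces the output of any aggregation rule $Aggr$ taking values in $\{\tilde v_i:i\in[n]\}$ to have an arbitrarily large norm. This violates condition (ii) of Definition~\ref{def:dim_byz}, which would need $\E\|Aggr\|^r$ to be bounded by a linear combination of terms in $\E\|G\|^{r_i}$ that do not depend on the adversary's choice.

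Concretely, I would assume $d\geq n$ (this suffices, since the theorem only needs to fail for some admissible parameter triple) and have the adversary corrupt a single coordinate per worker, using a distinct coordinate for each worker. That is, pick a large constant $B>0$ and set
\begin{align*}
(\tilde v_i)_j = \begin{cases} B, & j=i,\\ (v_i)_j, & j\neq i,\end{cases}
\end{align*}
for every $i\in[n]$. By construction, for each dimension $j\in[d]$ at most one entry among $\{(\tilde v_i)_j:i\in[n]\}$ is Byzantine (namely the entry with $i=j$), so this attack is admissible even for $q=1$.

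Since $Aggr\in\{\tilde v_i:i\in[n]\}$, there exists a (possibly data- and randomness-dependent) index $k\in[n]$ with $Aggr=\tilde v_k$. But by construction $(\tilde v_k)_k=B$, and so $\|Aggr\|^2\geq B^2$ almost surely. Taking expectations yields $\E\|Aggr\|^r\geq B^r$ for $r=2,3,4$. Because $\E\|G\|^{r_1},\dots,\E\|G\|^{r_{n-q}}$ are fixed quantities depending only on the data distribution, no linear combination of them can dominate $B^r$ once $B$ is chosen large enough. Hence condition (ii) of Definition~\ref{def:dim_byz} fails, and no such aggregation rule can be dimensional $(\alpha,q)$-Byzantine resilient.

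The only subtlety I anticipate is handling the per-dimension budget: we must distribute the $n$ corrupted entries across the $d$ coordinates so that at most $q$ are placed in any given coordinate. This is trivial when $d\geq n$ and $q=1$, which is the cleanest case and already suffices to prove the theorem; for smaller $d$ one simply spreads the corruptions evenly and raises $q$ to $\lceil n/d\rceil$, at no cost to the argument. Beyond that, the proof is essentially a one-line norm bound combined with the observation that the adversary controls $B$.
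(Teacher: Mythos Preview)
Your proof is correct and uses the same diagonal corruption pattern as the paper: corrupt coordinate $i$ of vector $i$ for each $i\in[n]$, so that every dimension carries at most one Byzantine entry yet every candidate output $\tilde v_k$ is contaminated. The only difference is which condition of Definition~\ref{def:dim_byz} you break. The paper sets the corrupted entry to an arbitrarily large \emph{negative} value (relative to $g$) and argues that $\ip{\E[Aggr]}{g}$ can be driven negative, violating condition~(i); you set it to a large constant $B$ and observe $\|Aggr\|\ge B$ almost surely, violating condition~(ii). Your route is arguably cleaner, since it needs no assumption on the coordinates of $g$ and avoids tracking the random index $k$ inside an inner product; the paper's route ties the failure more directly to the ``descent'' interpretation of condition~(i). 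Either way the construction is identical and the argument is a one-liner once the diagonal attack is in place.
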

\begin{proof}
We demonstrate a counter example.
Consider the case where the $i$th dimension of the $i$th vector $v_i$ is manipulated by the malicious workers (e.g. multiplied by an arbitrarily large negative value), where $i \in [n]$. Thus, up to 1 value of each dimension is Byzantine. However, no matter which vector is chosen, as long as the aggregation is chosen from $\{\tilde{v}_i: i \in [n]\}$, the inner product $\ip{\E[Aggr]}{g}$ can be arbitrarily large negative value under the Byzantine attack. Thus, SGD is not expectedly descendant, which means it will not converge to critical points.

Hence,  any aggregation rule that outputs $Aggr \in \{\tilde{v}_i: i \in [n]\}$ is not dimensional $(\alpha, q)$-Byzantine resilient with $\forall \alpha, \forall q > 0$.
\end{proof}

\subsection{Geometric Median}
We use the following lemma~\cite{minsker2015geometric,cohen2016geometric} without proof to bound the geometric median.
\begin{lemma}
\label{lem:geo_median}
Let $z_1, \ldots, z_n$ denote $n$ points in a Hilbert space. Let $z_*$ denote a $(1+\epsilon)$-approximation of their geometric median, i.e., $\sum_{i \in [n]} \|z_*-z_i\| \leq (1+\epsilon) \min_z \sum_{i \in [n]} \|z-z_i\|$ for $\epsilon \geq 0$. For any $q$ such that $\frac{q}{n} \in (0, 1/2)$ and given $r \in \R$, if $\sum_{i \in [n]} \1_{\|z_i\| \leq r} \geq (1-q/n)n$, then 
\begin{align*}
\|z_*\| \leq c_q r + \epsilon c_z, 
\end{align*}
where $c_q = \frac{2n-2q}{n-2q}$, $c_z = \frac{\min_z \sum_{i \in [n]} \|z-z_i\|}{n-2q}$.
\end{lemma}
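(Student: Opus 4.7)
The plan is to prove the bound by a direct comparison of the objective $F(z) = \sum_{i\in[n]} \|z - z_i\|$ evaluated at $z_*$ against the trivial reference point $0$, using only triangle inequalities and the $(1+\epsilon)$-approximation hypothesis. No optimality conditions or subgradient calculus for the geometric median are required.

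First, I would partition the index set into good indices $J = \{i \in [n] : \|z_i\| \leq r\}$, which satisfies $|J| \geq n-q$ by the indicator-sum hypothesis, and bad indices $J^c$ with $|J^c| \leq q$. Writing
\[
F(z_*) - F(0) = \sum_{i \in [n]} \bigl(\|z_* - z_i\| - \|z_i\|\bigr),
\]
I would apply two one-sided triangle inequalities index by index. For $i \in J$, the reverse triangle inequality gives $\|z_* - z_i\| \geq \|z_*\| - \|z_i\|$, so $\|z_* - z_i\| - \|z_i\| \geq \|z_*\| - 2\|z_i\| \geq \|z_*\| - 2r$. For $i \in J^c$, the ordinary triangle inequality $\|z_i\| \leq \|z_*\| + \|z_* - z_i\|$ rearranges to $\|z_* - z_i\| - \|z_i\| \geq -\|z_*\|$. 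Summing the $n$ lower bounds and using $|J| \geq n-q$, $|J^c| \leq q$ yields
\[
F(z_*) - F(0) \geq (n-q)\bigl(\|z_*\| - 2r\bigr) + q \cdot (-\|z_*\|) = (n-2q)\|z_*\| - 2(n-q)\, r.
\]

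For the matching upper bound on the same quantity I would invoke the approximation hypothesis. Letting $F^* = \min_z F(z)$, the defining inequality of the $(1+\epsilon)$-approximation gives $F(z_*) \leq (1+\epsilon) F^*$, and feasibility of $0$ gives $F^* \leq F(0)$. Chaining these yields $F(z_*) - F(0) \leq (1+\epsilon) F^* - F^* = \epsilon F^*$.

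Combining the lower and upper bounds produces $(n-2q)\|z_*\| - 2(n-q)\, r \leq \epsilon F^*$. Since $q/n \in (0, 1/2)$, the coefficient $n - 2q$ is strictly positive, and dividing through yields $\|z_*\| \leq \frac{2(n-q)}{n-2q} r + \frac{\epsilon F^*}{n-2q} = c_q r + \epsilon c_z$, as claimed. The only delicate step is choosing consistent orientations in the two triangle inequalities so that the $\|z_*\|$-terms aggregate with the positive coefficient $|J| - |J^c| \geq n-2q$ rather than cancelling; once this accounting is done the rest is routine algebra.
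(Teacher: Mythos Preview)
The paper does not prove this lemma; it is stated with citations to external references and used as a black box. Your argument therefore supplies a proof where the paper offers none, and it is essentially correct and self-contained.

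One small point deserves care. The step ``summing the $n$ lower bounds and using $|J| \geq n-q$, $|J^c| \leq q$'' to obtain $F(z_*) - F(0) \geq (n-q)(\|z_*\| - 2r) - q\|z_*\|$ tacitly assumes $\|z_*\| - 2r \geq 0$: when $\|z_*\| - 2r < 0$, replacing the factor $|J| \geq n-q$ by the smaller value $n-q$ moves the product $|J|(\|z_*\|-2r)$ in the wrong direction. The fix is immediate: if $\|z_*\| < 2r$ then, since $c_q = \tfrac{2(n-q)}{n-2q} > 2$ whenever $q/n \in (0,1/2)$, the conclusion $\|z_*\| < 2r < c_q r \leq c_q r + \epsilon c_z$ holds trivially; so one may assume $\|z_*\| \geq 2r$ at the outset, after which your summation step is valid as written. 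With that one-line case split added, the proof is complete.
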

Ideally, the geometric median~($\epsilon = 0$) ignores the second term $\epsilon c_z$.

Using the lemma above, we can prove the classic Byzantine resilience of geometric median.
\begin{theorem}
Let $v_1, \ldots, v_n$ be any i.i.d. random $d$-dimensional vectors s.t. $v_i \sim G$, with $\E[G] = g$ and $\E\|G-g\|^2 = d\sigma^2$. $q$ of $\{v_i: i \in [n]\}$ are replaced by arbitrary $d$-dimensional vectors $b_1, \ldots, b_q$. If $q \leq \ceil{\frac{n}{2}}-1$ and $\eta_1(n, q)\sqrt{d}\sigma < \|g\|$, where $\eta_1(n, q) = \frac{2n-2q}{n-2q} \sqrt{n-q}$, then the $GeoMed$ function is classic $(\alpha_1, q)$-Byzantine resilient where $0 \leq \alpha_1 < \pi/2$ is defined by 
$\sin \alpha_1 = \frac{\eta_1(n,q) \sqrt{d} \sigma}{\|g\|}$.
\end{theorem}
\begin{proof}
We only need to prove that $GeoMed(\cdot)$ satisfies the two conditions of classic $(\alpha_1, q)$-Byzantine resilience defined in Definition~\ref{def:byz}.

\textbf{Condition (i)}:\\
Let the sequence $\{\tilde{v}_j: j \in [n]\}$ be defined as
\begin{align*}
\tilde{v}_j = 
\begin{cases}
v_j, \mbox{for correct $j$,}\\
arbitrary, \mbox{for Byzantine $j$}.
\end{cases}
\end{align*}
Let $\lambda$ denote the geometric median of $\{\tilde{v}_j: j \in [n]\}$. Thus, $z_* = \lambda - g$ is the geometric median of $\{\tilde{v}_j - g: j \in [n]\}$. Using Lemma~\ref{lem:geo_median}, and taking $r = \max_{\mbox{correct } j}\|\tilde{v}_j - g\|$, under the assumption 
$q \leq \ceil{\frac{n}{2}}-1 < n/2$, we obtain 
\begin{align*}
\|\lambda - g\| \leq \frac{2n-2q}{n-2q} \max_{\mbox{correct } j}\|\tilde{v}_j - g\|.
\end{align*}

Now, we can bound $\|\E[\lambda] - g\|^2$ as follows:
\begin{align*}
&\|\E[\lambda] - g\|^2 \\
&\leq \E \|\lambda - g\|^2 \quad \mbox{(Jensen's inequality)} \\
&\leq \E \left[ \left(\frac{2n-2q}{n-2q}\right)^2 \max_{\mbox{correct } j}\|\tilde{v}_j - g\|^2 \right] \\
&\leq \E \left[ \left(\frac{2n-2q}{n-2q}\right)^2 \sum_{\mbox{correct } j}\|\tilde{v}_j - g\|^2 \right] \\
&= \underbrace{\left(\frac{2n-2q}{n-2q}\right)^2 (n-q)}_{\eta_1^2(n, q)} d\sigma^2.
\end{align*}
By assumption, $\eta_1(n,q) \sqrt{d} \sigma < \|g\|$, i.e. $\E[\lambda]$ belongs to a ball centered at $g$ with radius $\eta_1(n,q) \sqrt{d} \sigma$. This implies 
\begin{align*}
\ip{\E[\lambda]}{g} \geq (1- \sin^2 \alpha_1) \|g\|^2 \geq (1- \sin \alpha_1) \|g\|^2,
\end{align*}
where $\sin \alpha_1 = \eta_1(n,q) \sqrt{d} \sigma / \|g\|$.

\textbf{Condition (ii)}:\\
We re-use Lemma~\ref{lem:geo_median} by taking $z_* = \lambda$, $z_i = \tilde{v}_i$ for $\forall i \in [n]$, and $r = \max_{\mbox{correct } j}\|\tilde{v}_j\|$. Thus, we have
\begin{align*}
&\|\lambda\| 
&\leq c_q \max_{\mbox{correct } j}\|\tilde{v}_j\| 
&\leq c_q \sum_{\mbox{correct } j} \|\tilde{v}_j\|. 
\end{align*}
Without loss of generality, we denote the sequence $\{\tilde{v}_j: \mbox{correct } j\}$ as $\{v_1, \ldots, v_{n-q}\}$. 
Thus, there exists a constant $c_0$ such that
\begin{align*}
\|\lambda\|^r \leq c_0 \sum_{r_1+\ldots+r_{n-q} = r} \|v_1\|^{r_1} \ldots \|v_{n-q}\|^{r_{n-q}}.
\end{align*}
Since $v_i$'s are i.i.d., we obtain that $\E\|\lambda\|^r$ is bounded above by a linear combination of terms of the form $\E\|v_1\|^{r_1} \ldots \E\|v_{n-q}\|^{r_{n-q}} = \E\|G\|^{r_1} \ldots \E\|G\|^{r_{n-q}}$ with $r_1+\ldots+r_{n-q} = r$, which completes the proof of condition (ii).
\end{proof}

\subsection{Marginal Median}
We use the following lemma to bound the one-dimensional median.
\begin{lemma}
\label{lem:median}
For a sequence composed of $q$ Byzantine values and $n-q$ correct values $u_1, \ldots, u_{n-q}$, if $q \leq \ceil{\frac{n}{2}} - 1$~(the correct value dominates the sequence), then the median value $m$ of this sequence satisfies $m \in [\min_i u_i, \max_i u_i]$, $i \in [n]$.
\end{lemma}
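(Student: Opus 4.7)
The plan is to argue by contradiction, using only a pigeonhole-style count on ranks after sorting. Let the combined sequence of length $n$ be sorted as $w_{(1)} \leq w_{(2)} \leq \ldots \leq w_{(n)}$, let $u_{\min} = \min_i u_i$ and $u_{\max} = \max_i u_i$ denote the extremes of the correct values, and let $m$ denote the one-dimensional median. I only need to show $m \leq u_{\max}$; the bound $m \geq u_{\min}$ is entirely symmetric.

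First I would handle the case of odd $n = 2k+1$, where the hypothesis reads $q \leq k$, so there are at least $n-q \geq k+1$ correct values, and the median is $m = w_{(k+1)}$. Suppose for contradiction that $m > u_{\max}$. Then every one of the $n-q \geq k+1$ correct values is strictly less than $m$, so at least $k+1$ entries of $w$ are strictly less than $w_{(k+1)}$. But at most $k$ entries (namely $w_{(1)}, \ldots, w_{(k)}$) can be strictly smaller than $w_{(k+1)}$, a contradiction. Hence $m \leq u_{\max}$.

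For even $n = 2k$, the hypothesis $q \leq \lceil n/2 \rceil - 1 = k-1$ gives $n-q \geq k+1$ correct values. Under the common convention $m = \tfrac{1}{2}(w_{(k)} + w_{(k+1)})$, it suffices to show both $w_{(k)} \leq u_{\max}$ and $w_{(k+1)} \leq u_{\max}$, since then their average lies in $[u_{\min}, u_{\max}]$. If $w_{(k+1)} > u_{\max}$, then all $n-q \geq k+1$ correct values are strictly less than $w_{(k+1)}$, which again forces at least $k+1$ of the $w$'s to lie strictly below $w_{(k+1)}$, impossible since only $k$ positions precede it. The bound on $w_{(k)}$ follows a fortiori. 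Combined with the symmetric lower-bound argument, this gives $m \in [u_{\min}, u_{\max}]$.

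There is really no serious obstacle here; the only thing to watch for is the convention chosen for the median when $n$ is even, which is why I split into the two parity cases above. Once the counting is set up, the hypothesis $q \leq \lceil n/2 \rceil - 1$ exactly supplies the strict majority of correct values that makes the pigeonhole step go through.
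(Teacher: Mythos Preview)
Your proof is correct and follows essentially the same approach as the paper: both split into parity cases and use a pigeonhole count on the sorted sequence to exploit the strict majority of correct values. The paper phrases the odd case as ``there must be a correct value on each side of the median,'' whereas you phrase it as a direct contradiction on ranks, but these are the same argument; if anything, your treatment of the even case (handling the averaging convention explicitly) is more carefully spelled out than the paper's, which simply says to re-use the odd-case technique.
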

\begin{proof}
If $m$ comes from correct values, then the result is trivial. Thus, we only need to consider the cases where $m$ comes from Byzantine values.

If $n$ is odd, then in the sorted sequence, there will be $\frac{n-1}{2}$ values on both sides of $m$. However, the number of correct values $n-q \geq \frac{n+1}{2} > \frac{n-1}{2}$. Thus, on both sides of $m$, there will be at least one correct value, which yields the desired result.

Furthermore, if $n$ is even, we can re-use the same technique above to prove $m \in [\min_i u_i, \max_i u_i]$.
\end{proof}

\begin{theorem}
Let $v_1, \ldots, v_n$ be any i.i.d. random $d$-dimensional vectors s.t. $v_i \sim G$, with $\E[G] = g$ and $\E\|G-g\|^2 = d\sigma^2$. For any dimension $j \in [d]$, $q$ of $\{(v_1)_j, \dots, (v_n)_j\}$ are replaced by arbitrary values, where $(v_i)_j$ is the $j$th dimension of the vector $v_i$. If $q \leq \ceil{\frac{n}{2}}-1$ and $\eta_2(n, q)\sqrt{d}\sigma < \|g\|$, where $\eta_2(n, q) = \sqrt{n-q}$, then the $MarMed$ function is dimensional $(\alpha_2, q)$-Byzantine resilient where $0 \leq \alpha_2 < \pi/2$ is defined by 
$\sin \alpha_2 = \frac{\eta_2(n,q) \sqrt{d} \sigma}{\|g\|}$.
\end{theorem}
\begin{proof}
We only need to prove that $MarMed(\cdot)$ satisfies the two conditions of dimensional $(\alpha_2, q)$-Byzantine resilience defined in Definition~\ref{def:dim_byz}.

\textbf{Condition (i)}:\\
Without loss of generality, we assume that $\E[G_i - g_i]^2 = \sigma_i^2$, $\E\|G-g\|^2 = \E\sum_{i=1}^d [G_i-g_i]^2 = \sum_{i=1}^d \sigma_i^2 = d\sigma^2$.
For any dimension $j \in [d]$, let the sequence $\{(\tilde{v}_1)_j, \ldots, (\tilde{v}_n)_j\}$ be defined as
\begin{align*}
(\tilde{v}_i)_j = 
\begin{cases}
(v_i)_j, \mbox{for correct $j$,}\\
arbitrary, \mbox{for Byzantine $j$}.
\end{cases}
\end{align*}
For the $j$th dimension, $j \in [d]$, the median value $\mu_j \in [\min_{\mbox{correct } i} (\tilde{v}_i)_j, \max_{\mbox{correct } i} (\tilde{v}_i)_j]$. 

Thus, we have 
\begin{align*}
&\E[\mu_j - g_j]^2 
\leq \E\left[ \max_{\mbox{correct } i}((\tilde{v}_i)_j - g_j)^2 \right] \\
&\leq \E\left[ \sum_{\mbox{correct } i}((\tilde{v}_i)_j - g_j)^2 \right] 
= \sum_{\mbox{correct } i} \E\left[ ((\tilde{v}_i)_j - g_j)^2 \right] \\
&= (n-q) \E[G_j - g_j]^2 \quad \mbox{(i.i.d. over $i$)} \\
&= (n-q) \sigma_j^2.
\end{align*}
Now, we can bound $\|\E[\mu] - g\|^2$ as follows:
\begin{align*}
&\|\E[\mu] - g\|^2 
\leq \E \|\mu - g\|^2 \quad \mbox{(Jensen's inequality)} \\
&= \E \left[ \sum_{j=1}^d (\mu_j-g_j)^2 \right] 
= \sum_{j=1}^d \E \left[ (\mu_j-g_j)^2 \right] \\
&\leq \sum_{j=1}^d (n-q) \sigma_j^2 
= (n-q) \sum_{j=1}^d \sigma_j^2 
= \underbrace{(n-q)}_{\eta_2^2(n, q)} d\sigma^2.
\end{align*}
By assumption, $\eta_2(n,q) \sqrt{d} \sigma < \|g\|$, i.e. $\E[\mu]$ belongs to a ball centered at $g$ with radius $\eta_2(n,q) \sqrt{d} \sigma$. This implies 
\begin{align*}
\ip{\E[\mu]}{g} \geq (1- \sin^2 \alpha_2) \|g\|^2 \geq (1- \sin \alpha_2) \|g\|^2,
\end{align*}
where $\sin \alpha_2 = \eta_2(n,q) \sqrt{d} \sigma / \|g\|$.

\textbf{Condition (ii)}:\\
By using the equivalence of norms in finite dimension, there exists a constant $c_1$ such that
\begin{align*}
&\|\mu\| = \sqrt{\sum_{j=1}^d \mu_j^2} 
\leq \sqrt{\sum_{j=1}^d \max_{\mbox{correct } i} (\tilde{v}_i)_j^2} \\
&\leq \sqrt{\sum_{j=1}^d \sum_{\mbox{correct } i} (\tilde{v}_i)_j^2} 
= \sqrt{\sum_{\mbox{correct } i} \|\tilde{v}_i\|^2} \\
&\leq c_1 \sum_{\mbox{correct } i} \|\tilde{v}_i\|. \\
&\mbox{(equivalence between $\ell_2$-norm and $\ell_1$-norm)}
\end{align*}
Without loss of generality, we denote the sequence $\{\tilde{v}_i: \mbox{correct } i\}$ as $\{v_1, \ldots, v_{n-q}\}$. 
Thus, there exists a constant $c_2$ such that
\begin{align*}
\|\mu\|^r \leq c_2 \sum_{r_1+\ldots+r_{n-q} = r} \|v_1\|^{r_1} \ldots \|v_{n-q}\|^{r_{n-q}}.
\end{align*}
Since $v_i$'s are i.i.d., we obtain that $\E\|\mu\|^r$ is bounded above by a linear combination of terms of the form $\E\|v_1\|^{r_1} \ldots \E\|v_{n-q}\|^{r_{n-q}} = \E\|G\|^{r_1} \ldots \E\|G\|^{r_{n-q}}$ with $r_1+\ldots+r_{n-q} = r$, which completes the proof of condition (ii).
\end{proof}

\subsection{Mean around Median}

The following lemma bounds the one-dimensional mean around median.
\begin{lemma}
\label{lem:meamed}
For a sequence (of scalar values) composed of $q$ Byzantine values and $n-q$ correct values $u_1, \ldots, u_{n-q}$, if $q \leq \ceil{\frac{n}{2}} - 1$~(the correct value dominates the sequence), then the mean-around-median value $\rho$~(defined in Definition~\ref{def:meamed}) and the median $\mu$~(defined in Definition~\ref{def:marmed}) of this sequence satisfies $|\rho - \mu| \leq \max_{i} |u_i - \mu|$.
\end{lemma}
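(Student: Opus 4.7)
The plan is to bound $|\rho-\mu|$ by a pigeonhole argument on how many values lie within a radius $r := \max_i |u_i - \mu|$ of the median, and then invoke the triangle inequality on the average. The key observation is that the mean-around-median $\rho$ is defined as the average over the $n-q$ entries of the sequence that are closest to $\mu$, and we will show that each of these $n-q$ chosen entries must lie in the interval $[\mu - r, \mu + r]$.

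First I would fix the notation: let $r = \max_{i \in [n-q]} |u_i - \mu|$ be the largest distance from a correct value to the median. By definition of $r$, all $n-q$ correct values $u_1,\ldots,u_{n-q}$ already lie within distance $r$ of $\mu$. This gives us $n-q$ entries of the sequence (ignoring any Byzantine entries) that are within radius $r$ of $\mu$. Consequently, when we rank all $n$ entries by distance to $\mu$ and select the top $n-q$ closest ones (which defines the index set used by $\mathrm{MeaMed}$), the largest distance among the selected entries cannot exceed $r$: if the $(n-q)$-th nearest entry had distance strictly greater than $r$, we would have strictly fewer than $n-q$ entries at distance $\leq r$, contradicting the existence of the $n-q$ correct values above.

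With every selected entry lying in $[\mu - r, \mu + r]$, the bound on the average follows immediately from the triangle inequality:
\begin{align*}
|\rho - \mu| = \Bigl|\tfrac{1}{n-q} \sum_{\mu \to i} \bigl((\tilde v_i) - \mu\bigr)\Bigr|
\leq \tfrac{1}{n-q} \sum_{\mu \to i} |(\tilde v_i) - \mu|
\leq r.
\end{align*}
I do not expect any single step to be a serious obstacle; the one subtle point worth spelling out is that the assumption $q \leq \lceil n/2 \rceil - 1$ is needed implicitly because it guarantees (via Lemma~\ref{lem:median}) that $\mu$ itself sits in the convex hull of the correct values, so that $r$ is finite and $\mu$ is a sensible reference point. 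Otherwise the argument is purely combinatorial plus one application of the triangle inequality.
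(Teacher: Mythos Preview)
Your proposal is correct and follows essentially the same argument as the paper: both show that every one of the $n-q$ selected entries must lie within distance $r=\max_i|u_i-\mu|$ of $\mu$ (since the $n-q$ correct values already do), and then bound the average by the triangle inequality. Your side remark about why $q \leq \lceil n/2\rceil - 1$ is needed is slightly off (the quantity $r$ is a maximum over finitely many reals and hence always finite, so neither proof actually uses that hypothesis), but this does not affect the main argument.
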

\begin{proof}
According to the definition of the mean around median $\rho$, it is the mean value over the top-$(n-1)$ values in the sequence, nearest to the median $\mu$. Denote such set of nearest values as $\{w_1, \ldots, w_{n-q}\}$. If any $w_i$ satisfies that $|w_i - \mu| > \max_{i} |u_i - \mu|$, then it cannot be in the set of the top-$(n-q)$ nearest values because all the $n-q$ correct values are nearer to $\mu$~($|u_i - \mu| \leq \max_{i} |u_i - \mu|$). Since all $w_i$ satisfies $|w_i - \mu| \leq \max_{i} |u_i - \mu|$, the average over them must also satisfies $|\frac{1}{n-q}\sum_i w_i - \mu| \leq \max_{i} |u_i - \mu|$.
\end{proof}

\begin{theorem}
Let $v_1, \ldots, v_n$ be any i.i.d. random $d$-dimensional vectors s.t. $v_i \sim G$, with $\E[G] = g$ and $\E\|G-g\|^2 = d\sigma^2$. For any dimension $j \in [d]$, $q$ of $\{(v_1)_j, \dots, (v_n)_j\}$ are replaced by arbitrary values, where $(v_i)_j$ is the $j$th dimension of the vector $v_i$. If $q \leq \ceil{\frac{n}{2}}-1$ and $\eta_3(n, q)\sqrt{d}\sigma < \|g\|$, where $\eta_3(n, q) = \sqrt{10(n-q)}$, then the $MeaMed$ function is dimensional $(\alpha_3, q)$-Byzantine resilient where $0 \leq \alpha_3 < \pi/2$ is defined by 
$\sin \alpha_3 = \frac{\eta_3(n,q) \sqrt{d} \sigma}{\|g\|}$.
\end{theorem}
\begin{proof}
We only need to prove that $MeaMed(\cdot)$ satisfies the two conditions of $(\alpha_3, q)$-Byzantine resilience defined in Definition~\ref{def:dim_byz}.

\textbf{Condition (i)}:\\
Without loss of generality, we assume that $\E[G_i - g_i]^2 = \sigma_i^2$, $\E\|G-g\|^2 = \E\sum_{i=1}^d [G_i-g_i]^2 = \sum_{i=1}^d \sigma_i^2 = d\sigma^2$.
For any dimension $j \in [d]$, let the sequence $\{(\tilde{v}_1)_j, \ldots, (\tilde{v}_n)_j\}$ be defined as
\begin{align*}
(\tilde{v}_i)_j = 
\begin{cases}
(v_i)_j, \mbox{for correct $j$,}\\
arbitrary, \mbox{for Byzantine $j$}.
\end{cases}
\end{align*}
For the $j$th dimension, $j \in [d]$, using Lemma~\ref{lem:meamed}, we have $|\rho_j - \mu_j| \leq \max_{\mbox{correct } i} |(\tilde{v}_i)_j - \mu_j|$, where $\mu_j$ is the median of the $j$th dimension.

Thus, we have 
\begin{align*}
&\E[\rho_j - g_j]^2 \\
&\leq 2\E[\rho_j - \mu_j]^2 + 2\E[\mu_j - g_j]^2 \\
&\leq 2\E\max_{\mbox{correct } i} [(\tilde{v}_i)_j - \mu_j]^2 + 2\E[\mu_j - g_j]^2 \\
&\leq 4\E\max_{\mbox{correct } i} [(\tilde{v}_i)_j - g_j]^2 + 6\E[\mu_j - g_j]^2 \\
&\leq 10\E\left[ \max_{\mbox{correct } i}((\tilde{v}_i)_j - g_j)^2 \right] \\
&\leq 10\E\left[ \sum_{\mbox{correct } i}((\tilde{v}_i)_j - g_j)^2 \right] \\
&= 10\sum_{\mbox{correct } i} \E\left[ ((\tilde{v}_i)_j - g_j)^2 \right] \\
&= 10(n-q) \E[G_j - g_j]^2 \quad \mbox{(i.i.d. over $i$)} \\
&= 10(n-q) \sigma_j^2.
\end{align*}
Now, we can bound $\|\E[\rho] - g\|^2$ as follows:
\begin{align*}
&\|\E[\rho] - g\|^2 
\leq \E \|\rho - g\|^2 \quad \mbox{(Jensen's inequality)} \\
&= \E \left[ \sum_{j=1}^d (\rho_j-g_j)^2 \right] 
= \sum_{j=1}^d \E \left[ (\rho_j-g_j)^2 \right] \\
&\leq \sum_{j=1}^d 10(n-q) \sigma_j^2 
= 10(n-q) \sum_{j=1}^d \sigma_j^2 
= \underbrace{10(n-q)}_{\eta_3^2(n, q)} d\sigma^2.
\end{align*}
By assumption, $\eta_3(n,q) \sqrt{d} \sigma < \|g\|$, i.e. $\E[\rho]$ belongs to a ball centered at $g$ with radius $\eta_3(n,q) \sqrt{d} \sigma$. This implies 
\begin{align*}
\ip{\E[\rho]}{g} \geq (1- \sin^2 \alpha_3) \|g\|^2 \geq (1- \sin \alpha_3) \|g\|^2,
\end{align*}
where $\sin \alpha_3 = \eta_3(n,q) \sqrt{d} \sigma / \|g\|$.

\textbf{Condition (ii)}:\\
By using the equivalence of norms in finite dimension, there exists a constant $c_3$ such that
\begin{align*}
&\|\rho\| = \sqrt{\sum_{j=1}^d \rho_j^2} \\
&\leq \sqrt{\sum_{j=1}^d 2[\rho_j - \mu_j]^2 + 2\mu_j^2} \\
&\leq \sqrt{\sum_{j=1}^d \max_{\mbox{correct } i} 2[(\tilde{v}_i)_j - \mu_j]^2 + 2\mu_j^2} \\
&\leq \sqrt{\sum_{j=1}^d 10\max_{\mbox{correct } i} (\tilde{v}_i)_j^2} \\
&\leq \sqrt{10\sum_{j=1}^d \sum_{\mbox{correct } i} (\tilde{v}_i)_j^2} 
= \sqrt{10\sum_{\mbox{correct } i} \|\tilde{v}_i\|^2} \\
&\leq c_3 \sum_{\mbox{correct } i} \|\tilde{v}_i\|. \\
&\mbox{(equivalence between $\ell_2$-norm and $\ell_1$-norm)}
\end{align*}
Without loss of generality, we denote the sequence $\{\tilde{v}_i: \mbox{correct } i\}$ as $\{v_1, \ldots, v_{n-q}\}$. 
Thus, there exists a constant $c_4$ such that
\begin{align*}
\|\rho\|^r \leq c_4 \sum_{r_1+\ldots+r_{n-q} = r} \|v_1\|^{r_1} \ldots \|v_{n-q}\|^{r_{n-q}}.
\end{align*}
Since $v_i$'s are i.i.d., we obtain that $\E\|\rho\|^r$ is bounded above by a linear combination of terms of the form $\E\|v_1\|^{r_1} \ldots \E\|v_{n-q}\|^{r_{n-q}} = \E\|G\|^{r_1} \ldots \E\|G\|^{r_{n-q}}$ with $r_1+\ldots+r_{n-q} = r$, which completes the proof of condition (ii).
\end{proof}

\newpage
\clearpage
\subsection{Experimental Details}
\label{sec:sup_exp}
In Table~\ref{tbl:mlp} and \ref{tbl:cnn}, we show the detailed network structures of the MLP and CNN used in our experiments.
\begin{table}[htb!]
\label{tbl:mlp}
\caption{MLP Summary}
\begin{tabular}{|l|l|l|}
\hline 
Layer (type) & Parameters & Previous Layer \\ \hline 
flatten(Flatten) & null & data \\ \hline
fc1(FullyConnected) & \#output=128 & flatten \\ \hline
relu1(Activation) & null & fc1 \\ \hline
fc2(FullyConnected) & \#output=128 & relu1 \\ \hline
relu2(Activation) & null & fc2 \\ \hline
fc3(FullyConnected) & \#output=10 & relu2 \\ \hline
softmax(SoftmaxOutput) & null & fc3 \\ \hline
\end{tabular} 
\end{table}
\begin{table}[htb!]
\label{tbl:cnn}
\centering
\caption{CNN Summary}
\begin{tabular}{|l|l|l|}
\hline 
Layer (type) & Parameters & Previous Layer \\ \hline 
conv1(Convolution)& channels=32, kernel\_size=3, padding=1 &data \\ \hline 
activation1(Activation)& null &conv1 \\ \hline 
conv2(Convolution)& channels=32, kernel\_size=3, padding=1 &activation1 \\ \hline 
activation2(Activation)& null &conv2 \\ \hline 
pooling1(Pooling)& pool\_size=2 &activation2 \\ \hline 
dropout1(Dropout)& probability=0.2 &pooling1 \\ \hline 
conv3(Convolution)& channels=64, kernel\_size=3, padding=1 &dropout1 \\ \hline 
activation2(Activation)& null &conv3 \\ \hline 
conv4(Convolution)& channels=64, kernel\_size=3, padding=1 &activation2 \\ \hline 
activation4(Activation)& null &conv4 \\ \hline 
pooling2(Pooling)& pool\_size=2 &activation4 \\ \hline 
dropout2(Dropout)& probability=0.2 &pooling2 \\ \hline 
flatten1(Flatten)& null &dropout2 \\ \hline 
fc1(FullyConnected)& \#output=512 &flatten1 \\ \hline 
activation5(Activation)& null &fc1 \\ \hline 
dropout3(Dropout)& probability=0.2 &activation5 \\ \hline 
fc2(FullyConnected)& \#output=512 &dropout3 \\ \hline 
activation6(Activation)& null &fc2 \\ \hline 
dropout4(Dropout)& probability=0.2 &activation6 \\ \hline 
fc3(FullyConnected)& \#output=10 &dropout4 \\ \hline 
softmax(SoftmaxOutput)& null &fc3 \\ \hline   
\end{tabular} 
\end{table}

\newpage
\clearpage
\subsection{Additional Experiments}

In this section, we illustrate the additional empirical results. 

In Figure~\ref{fig:mnist_batchsize}, we illustrate the top-1 accuracy of MLP on MNIST when batch-size varies, without Byzantine failures. The learning rate is 
\begin{align*}
\gamma = \frac{0.1 \times batchsize}{32}.
\end{align*}
The results show that when there is no Byzantine failures, \textit{GeoMed}, \textit{Multi-Krum}, and \textit{MeaMed} performs just like \textit{Mean}. \textit{MarMed} has slightly slower convergence. \textit{Krum} and \textit{Medoid} are the slowest. The gap is narrowed when the batch size increases.

In Figure~\ref{fig:mnist_nobyz}, we illustrate the top-1 accuracy of MLP on MNIST with gambler attack, when the estimated $q$ varies for \textit{Krum}, \textit{Multi-Krum}, and \textit{MeaMed}. \textit{Mean} without Byzantine failures and \textit{MarMed} are used as baselines. No matter what $q$ we use, the Krum-based algorithms always crash. For \textit{MeaMed}, when the estimated $q$ is too small (e.g., $q=2$), it will also crash. In most cases, \textit{MeaMed} performs well. The performance of \textit{MeaMed} is similar to \textit{MarMed}.

We illustrate all the experimental results of CNN on CIFAR10 additional to Section~\ref{sec:experiments}. For completeness, we also illustrate the experimental results of MLP on MNIST. The results are shown in Figure~\ref{fig:mnist_nobyz_appendix}-\ref{fig:cifar10_multiserver_appendix}. In general, all the experiments of CNN on CIFAR10 show similar results with the experiments of MLP on MNIST.

\begin{figure*}[htb!]
\centering
\subfigure[MLP on MNIST without Byzantine with different batch sizes]{\includegraphics[width=0.90\textwidth]{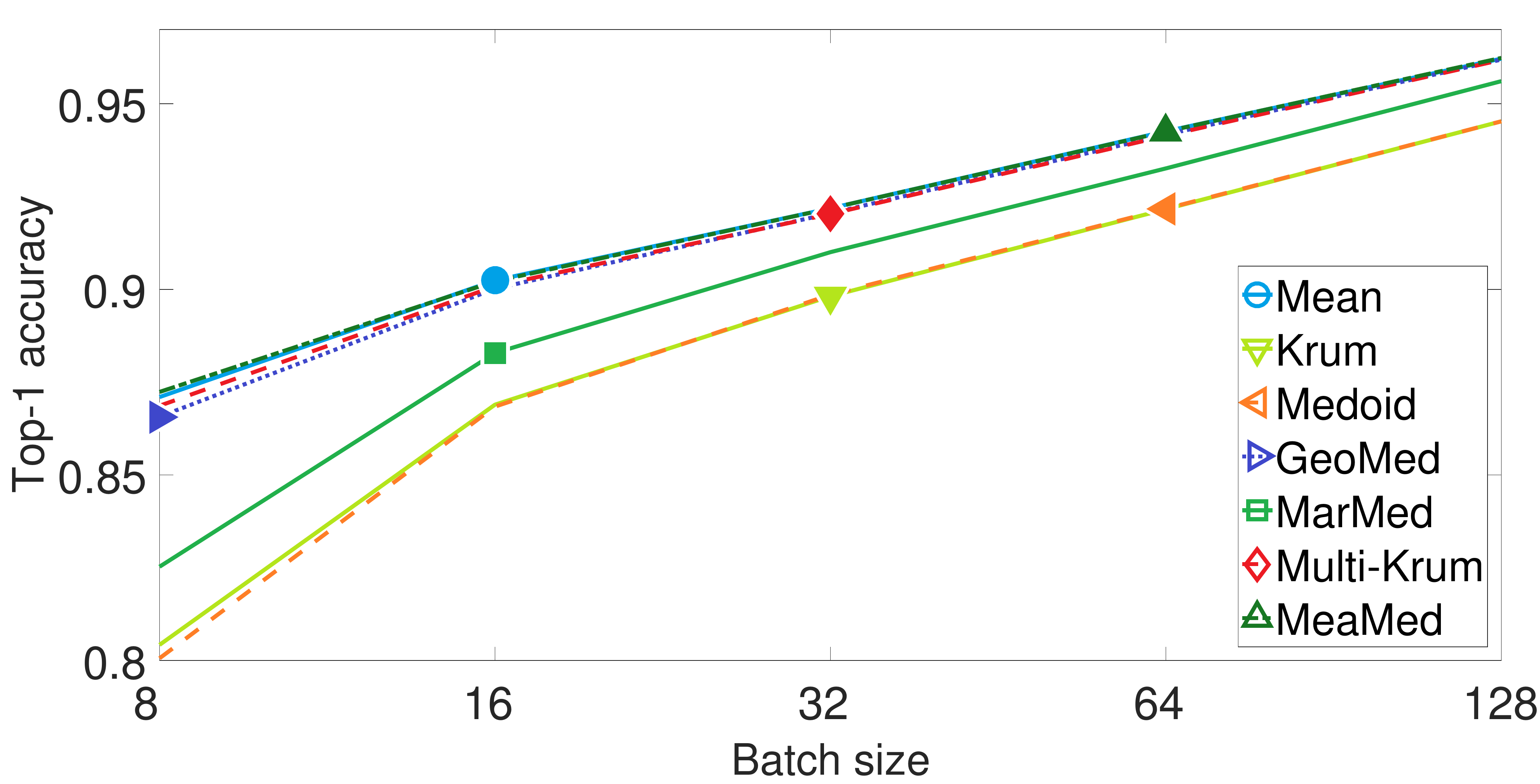}}
\caption{Top-1 accuracy of MLP on MNIST without Byzantine failures, when batch size varies. The learning rate is $\gamma = \frac{0.1 \times batchsize}{32}$.}
\label{fig:mnist_batchsize}
\end{figure*}

\begin{figure*}[htb!]
\centering
\subfigure[MLP on MNIST with gambler]{\includegraphics[width=0.90\textwidth]{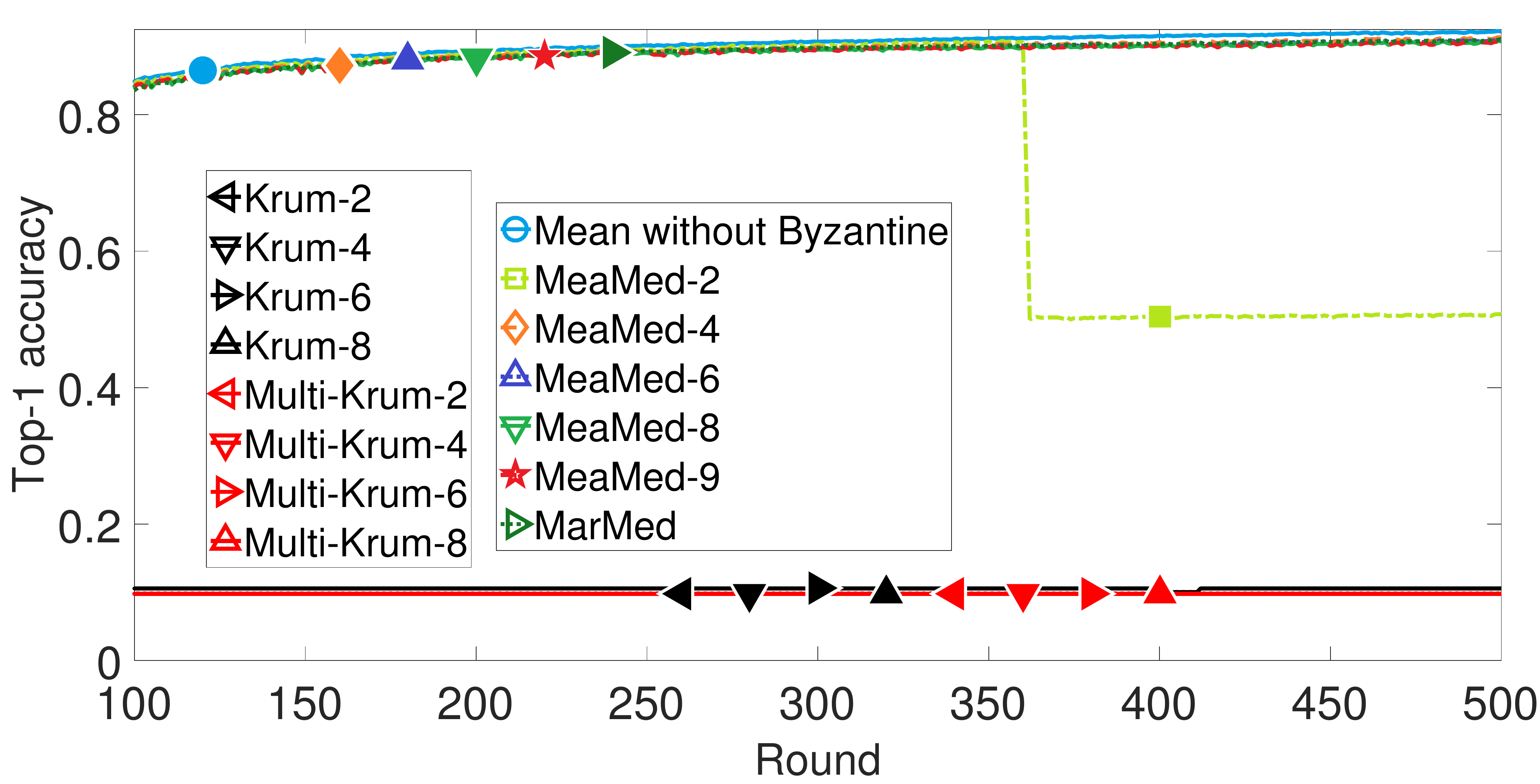}}
\subfigure[MLP on MNIST with gambler~(zoomed)]{\includegraphics[width=0.90\textwidth]{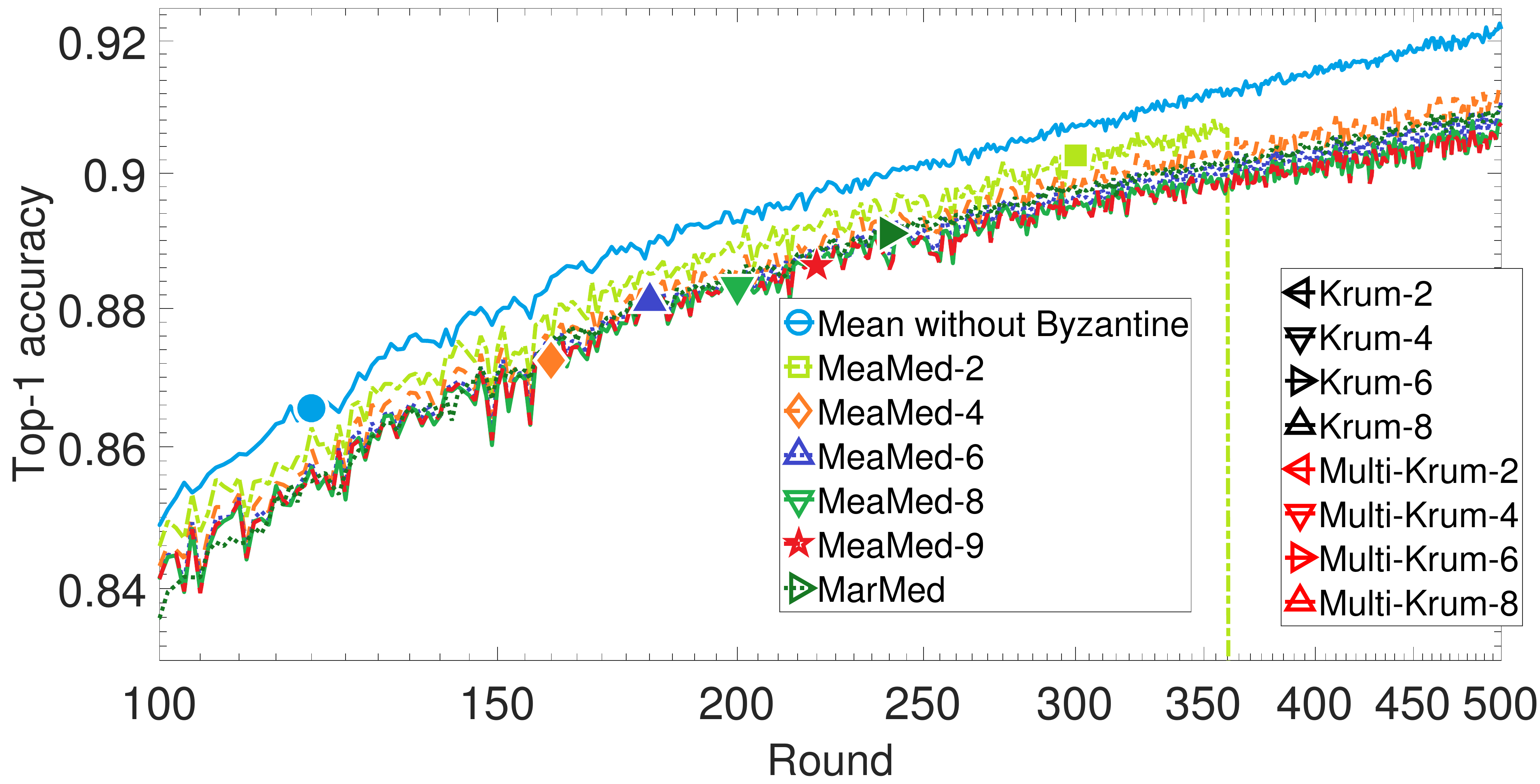}}
\caption{Top-1 accuracy of MLP on MNIST with gambler attack, when $q$ varies}
\label{fig:mnist_nobyz}
\end{figure*}

\begin{figure*}[htb!]
\centering
\subfigure[MLP on MNIST without Byzantine]{\includegraphics[width=0.90\textwidth]{mnist_no_byz_small}}
\subfigure[MLP on MNIST without Byzantine~(zoomed)]{\includegraphics[width=0.90\textwidth]{mnist_no_byz_large}}
\caption{Top-1 accuracy of MLP on MNIST without Byzantine failures.}
\label{fig:mnist_nobyz_appendix}
\end{figure*}
\begin{figure*}[htb!]
\centering
\subfigure[MLP on MNIST with Gaussian]{\includegraphics[width=0.90\textwidth]{mnist_gaussian_small}}
\subfigure[MLP on MNIST with Gaussian~(zoomed)]{\includegraphics[width=0.90\textwidth]{mnist_gaussian_large}}
\caption{Top-1 accuracy of MLP on MNIST with Gaussian Attack. 6 out of 20 gradient vectors are replaced by i.i.d. random vectors drawn from a Gaussian distribution with 0 mean and 200 standard deviation.}
\label{fig:mnist_gaussian_appendix}
\end{figure*}
\begin{figure*}[htb!]
\centering
\subfigure[MLP on MNIST with omniscient]{\includegraphics[width=0.90\textwidth]{mnist_omniscient_small}}
\subfigure[MLP on MNIST with omniscient~(zoomed)]{\includegraphics[width=0.90\textwidth]{mnist_omniscient_large}}
\caption{Top-1 accuracy of MLP on MNIST with Omniscient Attack. 6 out of 20 gradient vectors are replaced by the negative sum of all the correct gradients, scaled by a large constant~(1e20 in the experiments). }
\label{fig:mnist_omniscient_appendix}
\end{figure*}
\begin{figure*}[htb!]
\centering
\subfigure[MLP on MNIST with bit-flip]{\includegraphics[width=0.90\textwidth]{mnist_bitflip_small}}
\subfigure[MLP on MNIST with bit-flip~(zoomed)]{\includegraphics[width=0.90\textwidth]{mnist_bitflip_large}}
\caption{Top-1 accuracy of MLP on MNIST with Bit-flip Attack. For the first 1000 dimensions, 1 of the 20 floating numbers is manipulated by flipping the 22th, 30th, 31th and 32th bits.}
\label{fig:mnist_bitflip_appendix}
\end{figure*}
\begin{figure*}[htb!]
\centering
\subfigure[MLP on MNIST with gambler]{\includegraphics[width=0.90\textwidth]{mnist_multiserver_small}}
\subfigure[MLP on MNIST with gambler~(zoomed)]{\includegraphics[width=0.90\textwidth]{mnist_multiserver_large}}
\caption{Top-1 accuracy of MLP on MNIST with gambler attack. The parameters are evenly assigned to 20 servers. For one single server, any received value is multiplied by $-1e20$ with probability 0.05\%.}
\label{fig:mnist_multiserver_appendix}
\end{figure*}

\begin{figure*}[htb]
\centering
\includegraphics[width=0.90\textwidth]{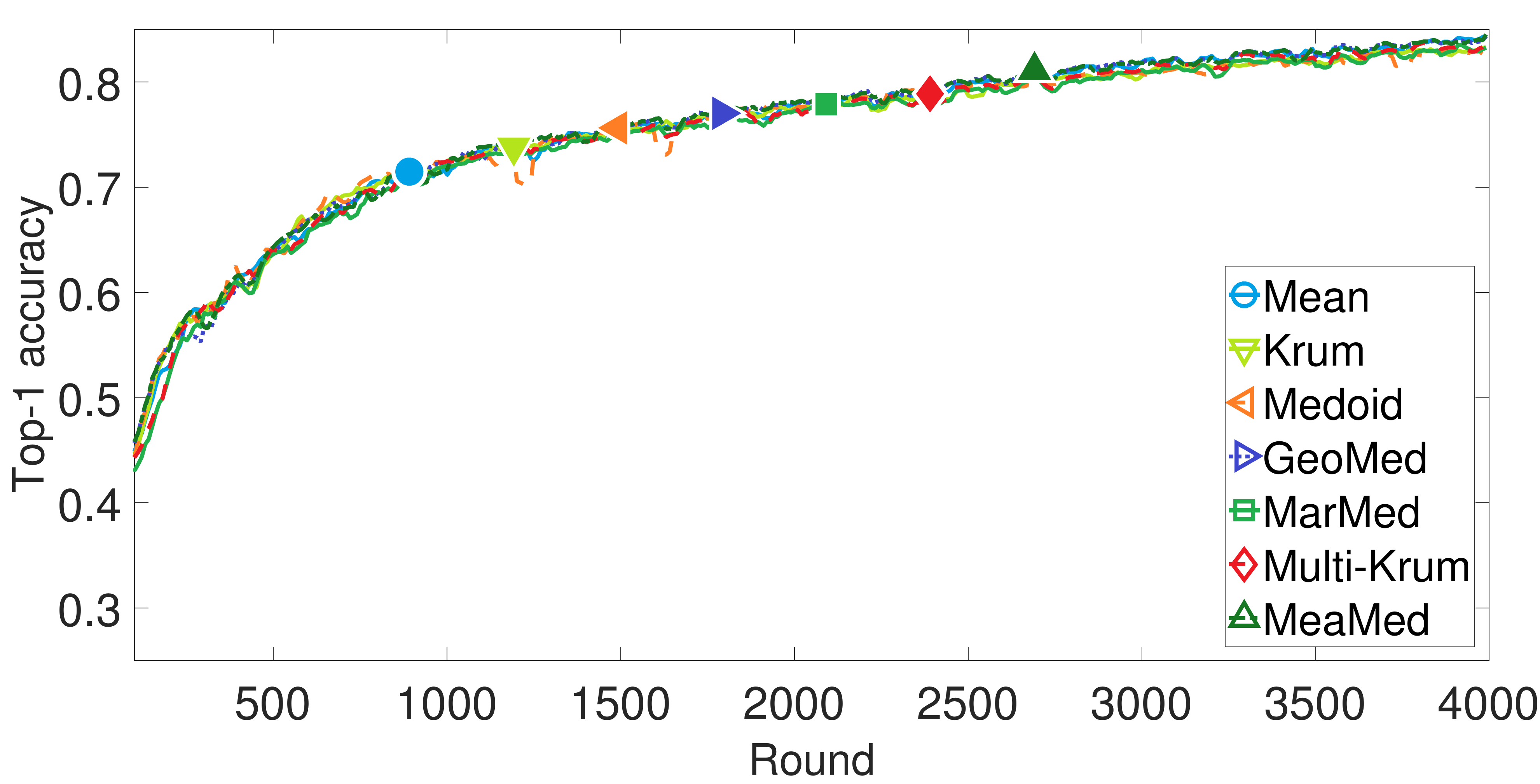}
\caption{Top-3 Accuracy of CNN VS. \# rounds evaluated on CIFAR10 without Byzantine failures}
\label{fig:cifar10_nobyz_appendix}
\end{figure*}
\begin{figure*}[htb]
\centering
\includegraphics[width=0.90\textwidth]{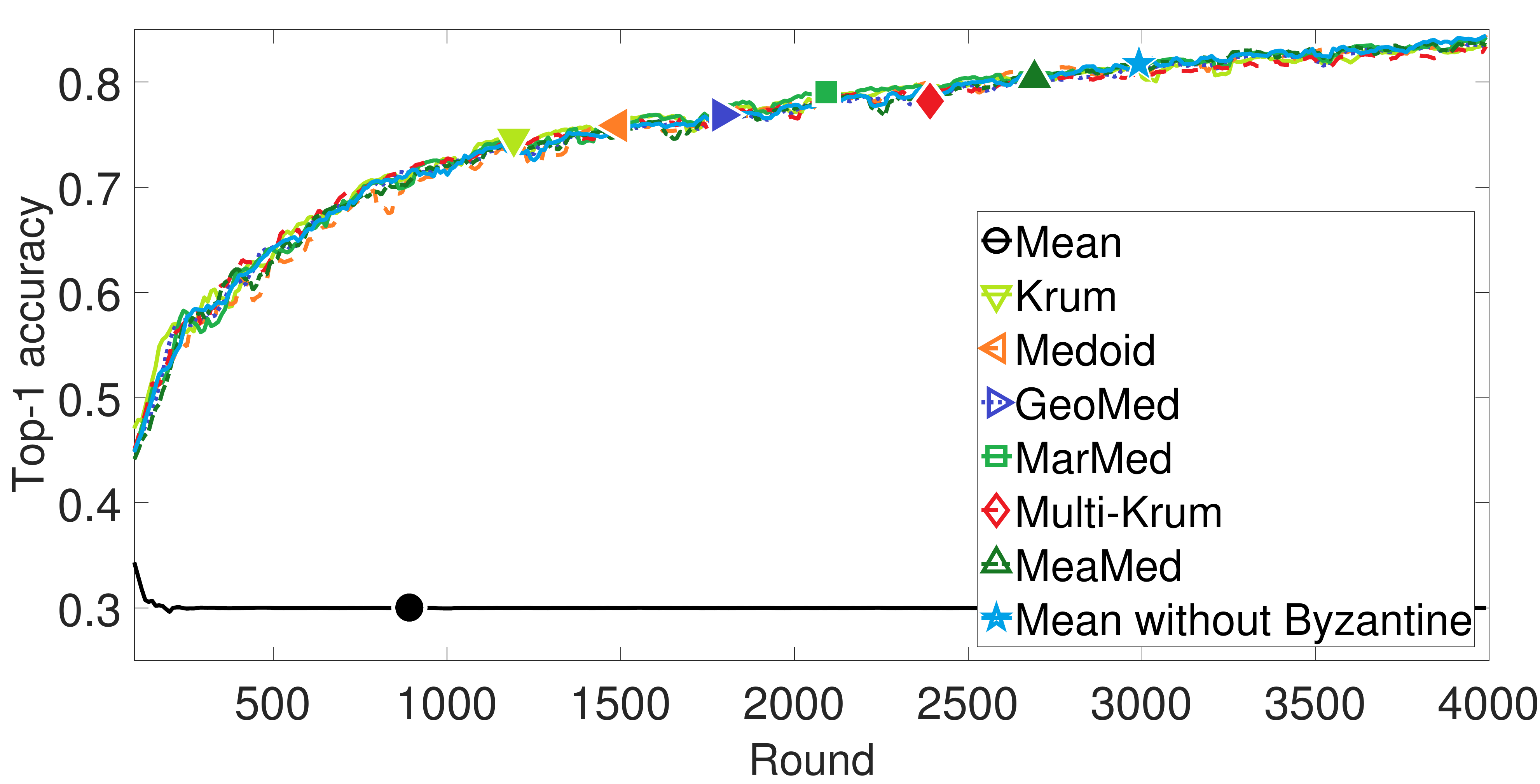}
\caption{Top-3 Accuracy of CNN VS. \# rounds evaluated on CIFAR10 with Gaussian Attack}
\label{fig:cifar10_gaussian_appendix}
\end{figure*}
\begin{figure*}[htb]
\centering
\includegraphics[width=0.90\textwidth]{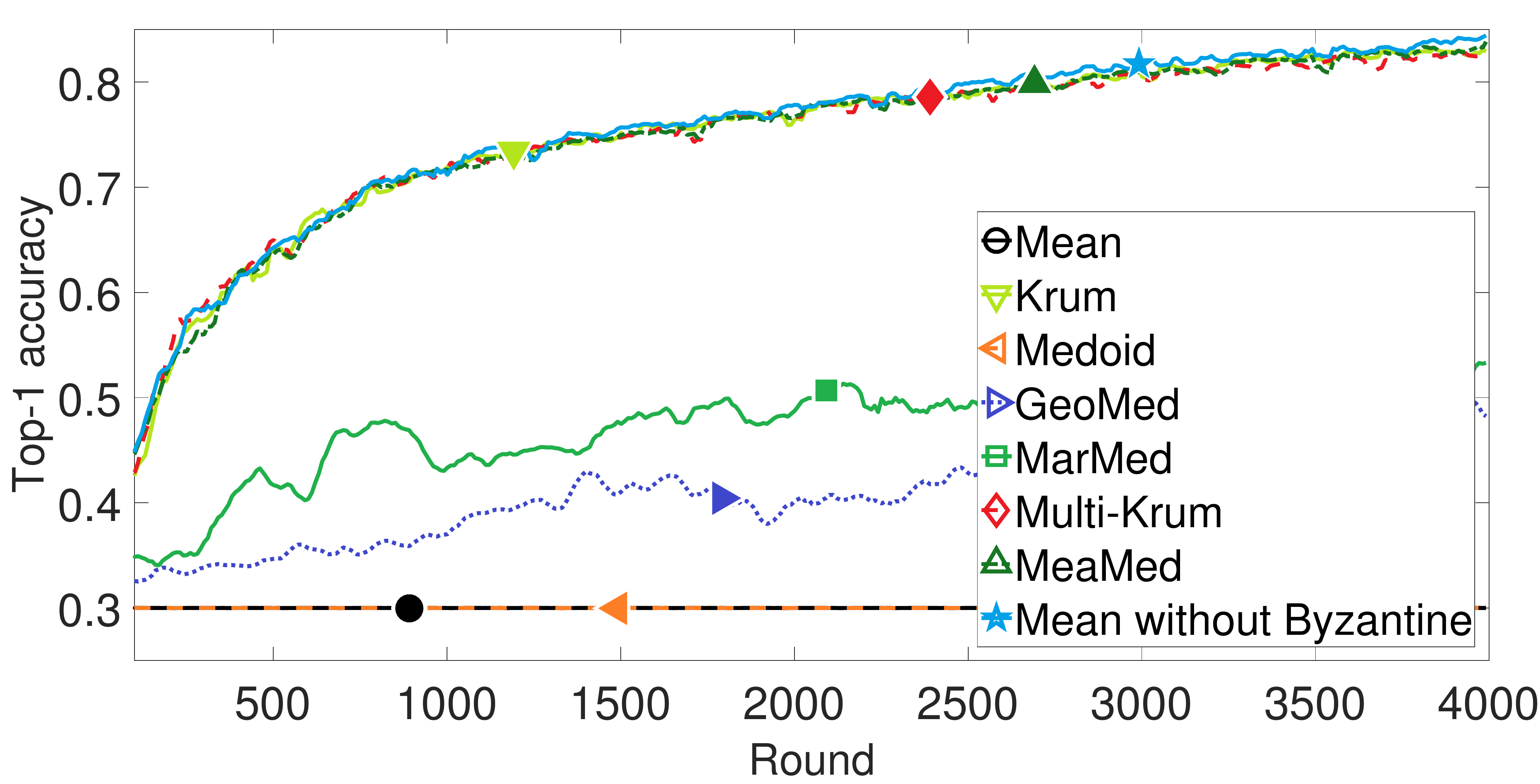}
\caption{Top-3 Accuracy of CNN VS. \# rounds evaluated on CIFAR10 with Omniscient Attack}
\label{fig:cifar10_omniscient_appendix}
\end{figure*}
\begin{figure*}[htb]
\centering
\includegraphics[width=0.90\textwidth]{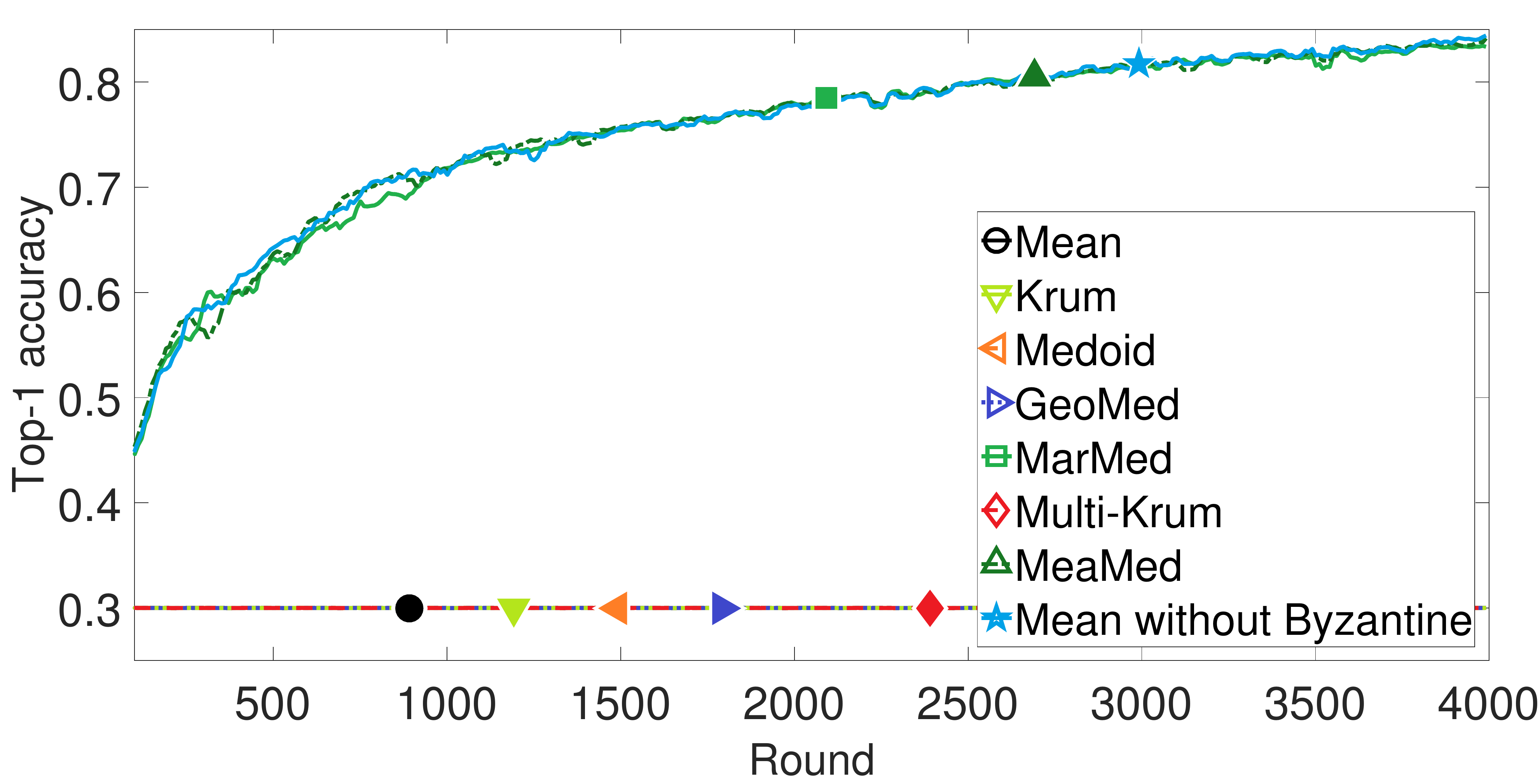}
\caption{Top-3 Accuracy of CNN VS. \# rounds evaluated on CIFAR10 with Bit-flip Attack}
\label{fig:cifar10_bitflip_appendix}
\end{figure*}
\begin{figure*}[htb]
\centering
\includegraphics[width=0.90\textwidth]{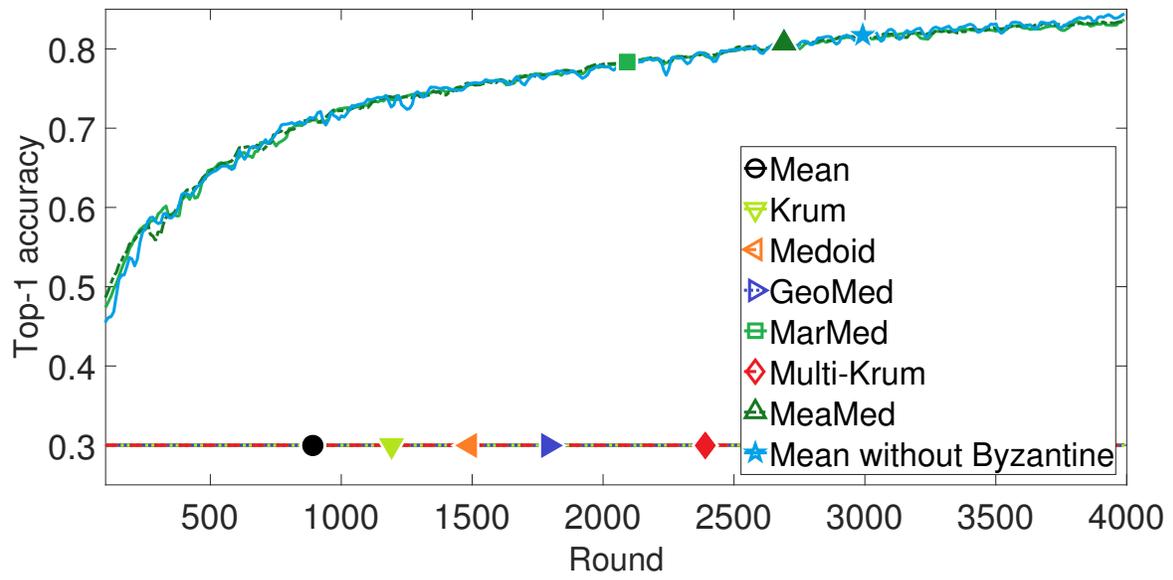}
\caption{Top-3 Accuracy of CNN VS. \# rounds evaluated on CIFAR10 with Gambler attack.}
\label{fig:cifar10_multiserver_appendix}
\vspace{-0.5cm}
\end{figure*}


\end{document}